\pgfplotsset{compat=1.4}
\newtheorem{theorem}{Theorem}
\newtheorem{corollary}{Corollary}
\newtheorem{lemma}{Lemma}
\newtheorem{claim}{Claim}
\newtheorem{definition}{Definition}
\newtheorem{example}{Example}
\def \citeN {\cite}
\newcommand{\notshow}[1]{{}}
\definecolor{MyGray}{rgb}{0.8,0.8,0.8}
\newtheorem*{rep@theorem}{\rep@title}
\newcommand{\newreptheorem}[2]{%
\newenvironment{rep#1}[1]{%
 \def\rep@title{#2 \ref{##1}}%
 \begin{rep@theorem}}%
 {\end{rep@theorem}}}
\newcommand{\bR}{\mathbb{R}}
\newcommand{\cM}{{\cal M}}
\newcommand{\cP}{{\cal P}}
\newcommand{\cT}{{\cal T}}
\newcommand{\cU}{{\cal U}}
\begin{document}

\title{Mechanism Design via Optimal Transport}

\author {
Constantinos Daskalakis\thanks{Supported by a Sloan Foundation Fellowship, a Microsoft Research Faculty Fellowship, and NSF Award CCF-0953960 (CAREER) and CCF-1101491.}\\
EECS, MIT \\
\tt{costis@mit.edu}
\and
Alan Deckelbaum\thanks{Supported by Fannie and John Hertz Foundation Daniel Stroock Fellowship and NSF Award CCF-1101491.}\\
Math, MIT\\
\tt{deckel@mit.edu}
\and
Christos Tzamos\thanks{Supported by NSF Award CCF- 1101491.}\\
EECS, MIT\\
\tt{tzamos@mit.edu}
}
\addtocounter{page}{-1}
\maketitle

\begin{abstract}
Optimal mechanisms have  been  provided in quite general multi-item settings~\cite{CaiDW12b}, as long as each bidder's type distribution is given explicitly by listing every type in the support along with its associated probability. In the implicit setting, e.g. when the bidders have additive valuations with independent  and/or continuous values for the items, these results do not apply, and it was recently shown that exact revenue optimization is intractable, even when there is only one bidder~\cite{DaskalakisDT13}. Even for item distributions with special structure, optimal mechanisms have been surprisingly rare~\cite{ManelliV06} and the problem is challenging even in the two-item case~\cite{HartN12}. In this paper, we provide a  framework for designing optimal mechanisms using optimal transport theory and duality theory. We instantiate our framework to obtain conditions under which only pricing  the grand bundle is optimal in multi-item settings (complementing the work of~\cite{ManelliV06}), as well as to characterize optimal two-item mechanisms. We use our results to derive closed-form descriptions of the optimal mechanism in several two-item settings, exhibiting also a setting where a continuum of lotteries is necessary for revenue optimization but a closed-form representation of the mechanism can still be found efficiently using our framework.
\end{abstract}

\thispagestyle{empty}

\newpage

\section{Introduction}\label{sec:intro}

{\em Optimal mechanism design} is the problem of designing a revenue-optimal auction for selling $n$ items to $m$ bidders whose valuations are drawn from known prior distributions. The special case of selling a single item is well-understood, going back to the work of ~\citeN{Myerson81} and \citeN{CM88}. The general ($n>1$) case has been much more challenging, and until very recently there has been no general solution. In a series of recent papers, Cai et al. provided efficiently computable revenue-optimal~\cite{CaiDW12,CaiDW12b} or approximately optimal mechanisms~\cite{CaiDW13} in very general settings, including when there are combinatorial constraints over which allocations of items to bidders are feasible, e.g. when these are matroid, matching, or more general constraints. However, these results, as well as the more specialized ones of~\cite{AlaeiFHHM12} for service-constrained environments, apply to the {\em explicit setting}, i.e. when the distributions over bidders' valuations are given explicitly, by listing every valuation in their support together with the probability it appears. 

Clearly, the explicit is not the right model when the type distributions are continuous and/or have extra structure that allows for more succinct representation. The obvious example, and the setting that we study in this paper, is when the bidders have {\em additive valuations} with {\em independent values} for different items. Here, each bidder's type distribution can be described by providing one marginal distribution per item, saving an exponential amount of information compared to the explicit description. The issue is that such {\em implicit settings} turn out to be even more challenging computationally. Indeed, essentially the only known positive results for additive bidders in the implicit setting are for when the values are drawn from Monotone Hazard Rate distributions where~\cite{BhattacharyaGGM10} obtain constant factor approximations to the optimal revenue, and~\cite{DaskalakisW12,CaiH13} obtain polynomial-time approximation schemes. For general distributions but a single buyer, \cite{HartN12} show that selling the items through separate auctions guarantees a $O({1 \over \log^2 n})$-fraction of the optimal revenue, which can be improved to ${1 \over 2}$ for $2$ items, even if the number of buyers is arbitrary. They also show that in the single-buyer setting with identically distributed items, offering the grand bundle at some optimal price guarantees a $O(\frac {1}{\log n})$-fraction of the optimal revenue.  At the same time, exact polynomial-time solutions have been recently precluded by~\cite{DaskalakisDT13}, where it is shown that computing optimal mechanisms is $\#$P hard, even when there is a single additive bidder whose values for the items are independent of support~$2$.

\smallskip The scarcity of algorithmic results as well as the recent computational lower bound~\cite{DaskalakisDT13} are consistent with our lack of structural understanding of the optimal mechanism in this setting. It had been long known that selling the items separately is sub-optimal. Here is an example from~\cite{HartN12}: Suppose that there are two items and an additive bidder whose values are independent and uniformly distributed in $\{1,2\}$. It is easy to see that selling each item separately results in expected revenue at most $2$, while if the auctioneer only offers the bundle of both items for $3$, the expected revenue is $2.25$. So bundling the items increases  revenue. It is also known that, unlike  the single-item case, the optimal mechanism need not be deterministic~\cite{Thanassoulis04,ManelliV06,ManelliV07,HartR11}. Here is an example from~\cite{DaskalakisDT13}: Suppose  there are two items and an additive bidder whose values are independent and uniformly distributed in $\{1,2\}$ and $\{1,3\}$ respectively. In this scenario, the optimal mechanism offers the bundle of both the items at price $4$; it also offers at price $2.5$ a lottery that, with probability $1/2$, gives both  items and, with probability $1/2$, offers just the~first~item. 

Besides these two insights (that bundling and randomization help) surprisingly little is known about the structure of the optimal mechanism, even in the single-bidder case that we consider in this paper. \citeN{Mcafee88} proposed conditions under which the optimal mechanism is deterministic, however these were found insufficient by \citeN{Thanassoulis04} and \citeN{ManelliV06}. The advantage of deterministic mechanisms is that they have a finite description: the price they charge for every possible bundle of the items. Hence looking for the optimal one is feasible, computational considerations aside. On the other hand, randomization adds an extra layer of difficulty: it is possible---we exhibit such an example in Section~\ref{sec:beta}---that the optimal mechanism offers a continuum of lotteries. Hence it is a priori not clear whether one could hope for  a concise (even a finite) description of the optimal mechanism, and it is even less clear whether one can optimize over the corresponding space of (infinite-dimensional) mechanisms. 

\smallskip
In this paper, we develop a general optimization framework for obtaining closed-form descriptions of optimal mechanisms in multi-item settings with one additive buyer with independent values for the items, where each value is distributed according to a continuous distribution specified by a closed-form description of its probability density function. Our framework is outlined below:

\begin{enumerate}

\item Optimal mechanism design in our setting is known to be reducible to optimizing a specific integral objective of the utility function $u: \mathbb{R}_{\geq 0}^n \rightarrow \mathbb{R}$  of the buyer, with the constraint that $u$ is increasing, convex, and continuous with gradient $\nabla u \in [0,1]^n$ almost everywhere; see e.g.~\cite{ManelliV06}. We describe this formulation in Section~\ref{sec:prelims}. \label{step1: primal}

\item Our first step is to relax this optimization problem (Section~\ref{sec:relaxedprogram}) by relaxing the convexity constraint. This constraint is intimately related to the truthfulness of the resulting mechanism; and, when violated, truthfulness and as a consequence revenue optimality are at stake. Regardless, we relax this constraint for developing our framework, and restore it only for our end results.\label{step2: relaxed primal}

\item We provide a dual to the relaxed problem, which amounts to the following optimal transport problem:

\begin{itemize}
\item {\sc Input:} Two probability measures $\mu$ and $\nu$ on $\mathbb{R}_{\geq 0}^n$, implicitly defined by the buyer's value distributions;

\item {\sc Output:} The optimal transport from $\mu$ to $\nu$, where the cost of transferring a unit of probability mass from point $x$ to point $y$ is $\sum_{i} \max\{x_i-y_i,0\}$.

\end{itemize}
This dual problem is a continuous analog of minimum-weight bipartite matching. We prove a complementary slackness condition (Theorem~\ref{thm::main}) for certifying optimality of primal and dual solutions.
\label{step3: dual}

\item  The end products of our dual relaxation, complementary slackness, and restoration of convexity are two structural theorems for optimal mechanism design:
\begin{itemize}

\item Theorem~\ref{bundlingthm} provides a condition under which the optimal mechanism is a take-it-or-leave-it offer of the grand bundle at some critical price $p^*$; $p^*$ is not arbitrary but the boundary of a convex region of a particular measure, which we show how to define in Section~\ref{bundlingsection}. Instantiating Theorem~\ref{bundlingthm},
\begin{itemize}
\item  {\em (exponential distributions)} we analytically derive the optimal mechanism for two items distributed according to exponential distributions with arbitrary parameters $\lambda_1$, $\lambda_2$; we show that the optimal mechanism offers two options: (i) the grand bundle at some price and (ii) at a different price, the item with the thinner tail with probability $1$ and the other item with probability $\min\{\lambda_i\} \over \max\{\lambda_i\}$; see Section~\ref{exponentialsolution}.

\item {\em (power-law distributions)} we exhibit a setting with two items distributed according to non-identical power law distributions where the optimal mechanism only offers  the grand bundle; see Section~\ref{powerlaw}.

\end{itemize}

\item Theorem~\ref{generalthm} provides our general characterization of the optimal mechanism for two items. The characterization applies to settings that are {\em canonical} according to Definition~ \ref{def:canonical2}. Under this condition, the optimal mechanism is succinctly described in terms of a  decreasing, concave and continuous function in $\mathbb{R}^2$: all types under this function are allocated nothing and pay nothing; all other types are matched to a point of the function, in some canonical way specified by Theorem~\ref{generalthm}, and their allocation probabilities correspond to derivatives of the function at the corresponding point. Using our general theorem,
\begin{itemize}
\item {\em (beta distributions)} we exhibit a setting with two items distributed according to non-identical beta distributions where the optimal mechanism offers a continuum of lotteries; see Example~\ref{betaexample} in Section~\ref{sec:beta}. To the best of our knowledge, this is the first known explicit setting with two independent values where the optimal mechanism comprises a continuum of lotteries.
\end{itemize}
\end{itemize}

\item In the proofs of our structural theorems, we employ Strassen's theorem on stochastic domination of measures~\cite{Lindvall99}. As a consequence of our proof technique, we introduce a condition on stochastic domination in both our results. As it may be cumbersome to check stochastic domination, we develop an alternate condition (see Theorem~\ref{regionthm}) that implies stochastic domination. Our new condition is of independent interest to measure theory, and will be useful to the user of our results. Indeed, we rely on it for all our $2$-item applications described above. 
\end{enumerate}

\section{The Revenue-Maximization Program}\label{sec:prelims}

We aim to find the revenue-optimal mechanism $\cM$ for selling $n$ goods to a single additive bidder whose values $z = (z_1,...,z_n)$ for the goods are drawn independently from probability distributions with given densities $f_i(z_i): \mathbb{R}_{\geq 0} \rightarrow \mathbb{R}_{\geq 0}$, for all $i$. $\cM$ takes as input the vector $z$ of bidder's values and outputs the probability with which he will receive each good along with the price that he needs to pay for this  allocation. That is, $\cM$ consists of two functions $\cP: \bR_{\geq 0}^n \rightarrow [0,1]^n$ and $\cT: \bR_{\geq 0}^n \rightarrow \bR$ that give, respectively, the vector of allocation probabilities and the price that the bidder pays, as a function of the bidder's values.

The bidder receives utility $\cU(z, p,t) = z \cdot p - t$ when his values for the items are $z$ and he is offered the items with probabilities $p$ at price $t$.

We restrict our attention to mechanisms that are \emph{incentive compatible}, meaning that the bidder must have adequate incentives to reveal his values for the items truthfully, and \emph{individually rational}, meaning that the bidder has an incentive to participate in the mechanism.

\begin{definition}
The mechanism $\cM$ is \emph{incentive compatible (IC)} if and only if $\cU(z, \cP(z),  \cT(z)) \ge \cU(z, \cP(z'),  \cT(z'))$ for all $z,z' \in \mathbb{R}^n_{\geq 0}$.
\end{definition}

\begin{definition}
The mechanism $\cM$ is \emph{individually rational (IR)} if and only if $\cU(z, \cP(z),  \cT(z)) \ge 0$ for all $z \in \mathbb{R}^n_{\geq 0}$.
\end{definition}

When we enforce the IC constraints, we can assume that the buyer truthfully reports his type to the mechanism. Under this assumption, let $u: \mathbb{R}_{\geq 0}^n \rightarrow \mathbb{R}$ be the function that maps the buyer's valuation to the utility he receives by the mechanism $\cM$. We have that $u(z) = \cU(z, \cP(z),  \cT(z))$.

As noted in  \citeN{ManelliV06}, the mechanism can be uniquely determined by providing its corresponding utility function $u$. We readily use the characterization that the mechanism is IC if and only if $u$ is non-decreasing, convex, continuous, and $\nabla u \in [0,1]^n$ almost everywhere. We additionally require that $u(z) \ge 0$ to satisfy the IR constraint. Given $u$ we can compute the functions $\cP$ and $\cT$ by using the fact that $\cP(z) = \nabla u(z)$ and $\cT(z) =  \nabla u(z) \cdot z - u(z)$. 

Therefore, to find the revenue-optimal mechanism, we need to determine, given the probability density function $f(z)= \prod_i f_i(z_i)$, a nonnegative, nondecreasing, convex and continuous function $u$ with $\nabla u \in [0,1]^n$ almost everywhere such that the integral shown below (which equals the expected revenue) is maximized:
\begin{align}\int_{\bR_{\geq 0}^n} \cT(z) f(z) dz = \int_{\bR_{\geq 0}^n} \left[ \nabla u(z) \cdot z - u(z) \right] f(z) dz. \label{eq: revenue objective}
\end{align}

Throughout the paper, we make the following assumptions about the distribution of the buyer's value for each item:
\begin{itemize}
	\item The points where $f_i(z_i)$ is strictly positive lie within
	 a semi-open (not necessarily bounded) interval $D_i \triangleq [d_i^-, d_i^+)$, where $d_i^-$ is nonnegative and $d_i^+$ is possibly infinite.
	\item $f_i(z_i)$ is continuously differentiable on $D_i$.
	\item $d_i^- f_i(d_i^-)  = 0$.
	\item $\lim_{z_i \rightarrow d_i^+} z_i^2 f_i(z_i) = 0$ .
\end{itemize}
We denote by $D \subseteq \mathbb{R}^n_{\geq 0}$ the set $\times_i D_i$. We now explicitly write down the expected revenue:
$$\int_{d_n^-}^{d_n^+} \cdots \int_{d_1^-}^{d_1^+} \left( z_1 \frac{\partial u}{\partial z_i} + \cdots + z_n \frac{\partial u}{\partial z_n} - u(z) \right)f_1(z_1)\cdots f_n(z_n) dz_1\cdots dz_n.$$
Integrating by parts, we see that
$$\int_{d_i^-}^{d_i^+} z_i \frac{\partial u}{\partial z_i}f_i(z_i)dz_i = \lim_{M \rightarrow d_i^+} u(z_i,z_{-i})z_if_i(z_i)|_{z_i = d_i^-}^{z_i = M} - \int_{d_i^-}^{d_i^+} u(z_i,z_{-i})(f_i(z_i) + z_if'_i(z_i))dz_i.$$
Since $u(z_i, z_{-i}) \leq \sum_j z_j$ and $z_i^2f_i(z_i)\rightarrow 0$ as $z_i \rightarrow d_i^+$, we see that the first term is 0 for any fixed $z_{-i}$. Therefore, we have
$$\int_{d_i^-}^{d_i^+} z_i \frac{\partial u}{\partial z_i}f_i(z_i)dz_i = - \int_{d_i^-}^{d_i^+} u(z_i,z_{-i})(f_i(z_i) + z_if'_i(z_i))dz_i$$
and thus 
the integral giving the expected revenue for a chosen utility function $u$ is
\begin{align}
\int_D u(z)\left( -\nabla f(z) \cdot z - (n+1)f(z) \right) dz.\label{eq: revenue objective 2}
\end{align}

\section{Separating into Two Spaces}

We denote by $d_-$ the point $(d_1^-,\ldots,d_n^-) \in D$. For reasons which will become clear later, we treat the point $d_-$ differently from the rest of $D$. The formulation of Section~\ref{sec:prelims} naturally separates $D \setminus \{d_-\}$ into two regions:
\begin{align*}
\mathcal{X} &= \left\{ z \in D:-\nabla f(z) \cdot z > (n+1)f(z)\right\} \setminus\{d_-\} \\
 \mathcal{Y} &= \left\{z \in D: -\nabla f(z) \cdot z \leq (n+1)f(z)\right\} \setminus\{d_-\}.
 \end{align*}
To maximize revenue, we aim for $u$ to be large on $\mathcal{X}$ yet small on $\mathcal{Y}$.
We define the density functions $\mu_d^{\mathcal{X}} : \mathcal{X} \rightarrow \mathbb{R}$ and $\nu_d^{\mathcal{Y}} : \mathcal{Y} \rightarrow \mathbb{R}$ by
$$\mu_d^{\mathcal{X}}(x) = -\nabla f(x) \cdot x - (n+1)f(x); \qquad \nu_d^\mathcal{Y}(y) =  (n+1)f(y) + \nabla f(y) \cdot y$$
and define the corresponding measures $\mu^{\mathcal{X}}$ and $\nu^{\mathcal{Y}}$ on $\mathcal{X}$ and $\mathcal{Y}$, respectively.
 
With this notation, our problem is to find the function $u: D \rightarrow \mathbb{R}$ maximizing
$$\int_\mathcal{X} u(x) \mu_d^{\mathcal{X}}(x) dx - \int_\mathcal{Y} u(y) \nu_d^{\mathcal{Y}}(y) dy$$
subject to the constraints mentioned in Section~\ref{sec:prelims}.\footnote{We note that any $u : D \rightarrow \mathbb{R}_{\geq 0}$ can be appropriately extended to $u : \mathbb{R}^n_{\geq 0} \rightarrow \mathbb{R}$ while preserving all constraints. We therefore restrict our attention to $u : D \rightarrow \mathbb{R}_{\geq 0}$.} We notice that $u(d_-) = 0$ in any optimal mechanism.\footnote{If $u(d_-) > 0$ then we could charge all players an additional $u(d_-)$, thereby subtracting $u(d_-)$ from the utility function everywhere and increasing revenue.} Furthermore, we will show momentarily that
$\mu^{\mathcal{X}}(\mathcal{X}) = \nu^{\mathcal{Y}}(\mathcal{Y}) - 1$.
For technical reasons, we desire for the masses of the spaces to be equal under their respective measures, and we therefore insert the point $d_-$ into $\mathcal{X}$ by defining the space $\mathcal{X}_0 \triangleq \mathcal{X} \cup \{d_-\}$, and extending $\mu^{\mathcal{X}}$ by setting $\mu^{\mathcal{X}_0}(\{d_-\}) = 1 $
and $\mu^{\mathcal{X}_0}(A) = \mu^\mathcal{X}(A)$ for all $A \subseteq \mathcal{X}$. 

\begin{claim}\label{equalsone}
$\mu^{\mathcal{X}_0}(\mathcal{X}_0) = \nu^{\mathcal{Y}}(\mathcal{Y}).$
\end{claim}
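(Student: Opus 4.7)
The plan is to exploit the fact that the densities $\mu_d^{\mathcal{X}}$ and $\nu_d^{\mathcal{Y}}$ are simply the positive and negative parts of the single function $g(z) \triangleq -\nabla f(z)\cdot z - (n+1)f(z)$ on $D\setminus\{d_-\}$. Since $\mathcal{X}$ and $\mathcal{Y}$ partition $D\setminus\{d_-\}$ according to the sign of $g$,
\begin{equation*}
\mu^{\mathcal{X}}(\mathcal{X}) - \nu^{\mathcal{Y}}(\mathcal{Y}) \;=\; \int_D \bigl(-\nabla f(z)\cdot z - (n+1)f(z)\bigr)\,dz,
\end{equation*}
(the point $\{d_-\}$ has Lebesgue measure zero and so does not affect the integral). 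Thus the claim $\mu^{\mathcal{X}_0}(\mathcal{X}_0) = \mu^{\mathcal{X}}(\mathcal{X}) + 1 = \nu^{\mathcal{Y}}(\mathcal{Y})$ reduces to showing that the right-hand side equals $-1$, i.e.\ that $\int_D \nabla f(z)\cdot z\,dz = -n$.

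To evaluate $\int_D \nabla f(z)\cdot z\,dz = \sum_i \int_D z_i\,\partial_{z_i}f(z)\,dz$, I would exploit the product structure $f(z)=\prod_j f_j(z_j)$. For each coordinate $i$, Fubini factors the $i$th summand as
\begin{equation*}
\int_{D_i} z_i f_i'(z_i)\,dz_i \;\cdot\; \prod_{j\neq i}\int_{D_j} f_j(z_j)\,dz_j \;=\; \int_{D_i} z_i f_i'(z_i)\,dz_i,
\end{equation*}
since each $f_j$ is a probability density. Integration by parts on the remaining one-dimensional integral gives
\begin{equation*}
\int_{D_i} z_i f_i'(z_i)\,dz_i \;=\; \bigl[z_i f_i(z_i)\bigr]_{d_i^-}^{d_i^+} \;-\; \int_{D_i} f_i(z_i)\,dz_i.
\end{equation*}

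The final step is to check the boundary terms vanish using the excerpt's standing assumptions. At $z_i = d_i^-$ the contribution is $d_i^- f_i(d_i^-) = 0$ by hypothesis. At $z_i \to d_i^+$ we use $\lim_{z_i\to d_i^+} z_i^2 f_i(z_i) = 0$: if $d_i^+ = \infty$, then $z_i f_i(z_i) = z_i^2 f_i(z_i)/z_i \to 0$; if $d_i^+ < \infty$, then the hypothesis forces $f_i(z_i) \to 0$, hence $z_i f_i(z_i) \to d_i^+ \cdot 0 = 0$. Therefore each coordinate contributes $-1$, yielding $\int_D \nabla f(z)\cdot z\,dz = -n$, so $\mu^{\mathcal{X}}(\mathcal{X}) - \nu^{\mathcal{Y}}(\mathcal{Y}) = n - (n+1) = -1$ and the claim follows after adding the point mass of $1$ at $d_-$. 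The only real obstacle is the boundary analysis at $d_i^+$, which is precisely why the tail hypothesis $\lim z_i^2 f_i(z_i) = 0$ was imposed on the distributions at the outset.
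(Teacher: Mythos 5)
Your proof is correct and rests on the same computation as the paper's: it reduces the claim to showing $\int_D\bigl(-\nabla f(z)\cdot z-(n+1)f(z)\bigr)\,dz=-1$ and evaluates this by the coordinate-wise integration by parts whose boundary terms vanish precisely because of the standing assumptions $d_i^- f_i(d_i^-)=0$ and $\lim_{z_i\to d_i^+}z_i^2f_i(z_i)=0$. The paper merely packages the same calculation differently --- it recognizes the integral as the expected revenue of the constant utility function $u\equiv 1$ under formulation~\eqref{eq: revenue objective 2}, which equals $-1$ by the already-established (integration-by-parts) equivalence with~\eqref{eq: revenue objective} --- so your self-contained derivation is an equally valid rendering of the identical argument.
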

\begin{proof}
It suffices to show that 
$\int_\mathcal{Y} \nu^{\mathcal{Y}}(y) dy - \int_{\mathcal{X}} \mu^{\mathcal{X}}(x) dx = 1.$ Indeed, we have
\begin{align*}
\int_\mathcal{Y} \nu^{\mathcal{Y}}(y) dy - \int_{\mathcal{X}} \mu^{\mathcal{X}}(x) dx  &= -\int_{D} (-\nabla f(z) \cdot z - (n+1)f(z) )dz.
\end{align*}
We note that $\int_{D} (-\nabla f(z) \cdot z - (n+1)f(z) ) dz$ is the expected revenue under the constant utility function $u(z) = 1$. (To see this recall that~\eqref{eq: revenue objective 2} represents the expected revenue under utility function $u$.) The expected revenue in this case is $-1$ (see this by plugging $u(\cdot)=1$ into~\eqref{eq: revenue objective}, which is the alternative expression for expected revenue), and thus $\int_\mathcal{Y} \nu^{\mathcal{Y}}(y) dy - \int_{\mathcal{X}} \mu^{\mathcal{X}}(x) dx =  1$, as desired.
\end{proof}

In summary, our goal is to find the function $u : D \rightarrow \mathbb{R}$ which maximizes
$$\int_{\mathcal{X}_0} u(x) d\mu^{\mathcal{X}_0} - \int_{\mathcal{Y}} u(y) d\nu^{\mathcal{Y}}$$
subject to the constraints that $u$ is nondecreasing, convex, continuous and $\nabla u \in [0,1]^n$ almost everywhere, and $u(d_-) = 0$.


\begin{figure}[h!]
\vspace{-.4cm}
\begin{center}
\begin{tikzpicture}
\begin{axis}[width=2.2in, height=2.2in, ymin=1.1, ymax=3.3, xmin=0, xmax=1,  ytick pos=left, ytick={11.1}, yticklabels={$0$}, xtick={0},xticklabels={$d_-$}]
  \addplot+[color=gray, fill=gray!50, domain=0:2,samples=200,mark=none]
 {2.6-2.2*x*x+0.3*sin(1000*x)}
 \closedcycle;
\addplot[color=black, mark=*] coordinates{
(0,1.1)};
\node at (axis cs:0.25,1.8){$\mathcal{Y}$};
\node at (axis cs:0.7,2.4){$\mathcal{X}$};
\end{axis}
\end{tikzpicture}
\end{center}
\vspace{-.5cm}
\caption{$\mathcal{Y}$ is the region in which $-\nabla f(z) \cdot z - (n+1)f(z) \leq 0$ (excluding $d_-$), while $\mathcal{X}$ is the region in which the quantity is positive. We define $\mathcal{X}_0 = \mathcal{X} \cup \{d_-\}$, and set $\mu^{\mathcal{X}_0}(\{d_-\}) = \nu^{\mathcal{Y}}(\mathcal{Y}) - \mu^{\mathcal{X}}(\mathcal{X}).$ In all examples considered in this paper, $d_-$ is ``surrounded'' only by points in the $\mathcal{Y}$ region.}
\end{figure}
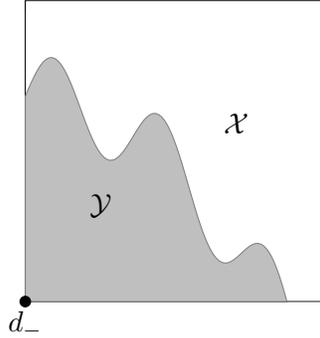

\begin{example}[Exponential Distribution]
Suppose that the pdf of each item $i$ is given by $f_i(z_i) = \lambda_i e^{-\lambda_i z_i}$ for $z_i \in [0,\infty)$. Then $z_if'_i(z_i)/f_i(z_i) = -\lambda_i z_i$
	and thus $$\mathcal{X} = \left\{ z: \sum \lambda_i z_i > n+1 \right\}; \qquad \mathcal{Y} = \left\{ z: \sum \lambda_i z_i \leq n+1 \right\} \setminus \{\vec{0}\}.$$ We also have
	$$\mu_d^{\mathcal{X}}(x) = \prod \lambda_i \left(\sum \lambda_i x_i - n - 1 \right)e^{-\sum \lambda_i x_i} ; \qquad \nu_d^{\mathcal{Y}}(y) = \prod \lambda_i \left( n+1- \sum \lambda_i y_i\right)e^{-\sum \lambda_i y_i} $$
where $\mu^{\mathcal{X}}_d$ and $\nu^{\mathcal{Y}}_d$ are defined on $\mathcal{X}$ and $\mathcal{Y}$, respectively. 
We extend $\mu^{\mathcal{X}}$ to the space $\mathcal{X}_0$ by setting $\mu^{\mathcal{X}_0}(\{\vec{0}\}) = 1.$
\end{example}

\begin{example}[Power-Law Distribution]
Suppose that the pdf of item $i$ is given by $f_i(z_i) = \frac{c_i - 1}{(1+z_i)^{c_i}}$ for $z_i \in [0,\infty)$, where each $c_i > 2$. (This restriction is necessary for $\lim_{z_i \rightarrow \infty}z_i^2f_i(z_i) = 0$.) Then $-z_if_i'(z_i)/f_i(z_i) = c_i z_i/(1+z_i)$ and thus
$$\mathcal{X} = \left\{ z : \sum \frac{c_i z_i}{1+z_i} > n+1 \right\}; \qquad \mathcal{Y} = \left\{ z : \sum \frac{c_i z_i}{1+z_i} \leq n+1 \right\} \setminus \{\vec{0}\}.$$
We also have
$$
	\mu_d^{\mathcal{X}}(x) = \prod_i \frac{c_i - 1}{(1+x_i)^{c_i}} \cdot \left(\sum_j \frac{c_jx_j}{1+x_j} - n - 1 \right); \;
	 \nu_d^{\mathcal{Y}}(y) = \prod_i \frac{c_i - 1}{(1+y_i)^{c_i}}\cdot \left(n+1 - \sum_j \frac{c_jy_j}{1+y_j}  \right) .
$$
Finally, we extend the measure $\mu^\mathcal{X}$ to the space $\mathcal{X}_0$ by setting $\mu^{\mathcal{X}_0}(\{\vec{0}\}) = 1$.
\end{example}

%
%
%
%
%
%
%
%
%
%
%
%

\section{The Relaxed Problem}\label{sec:relaxedprogram}

We define the ``cost function'' $c: \mathcal{X}_0 \times \mathcal{Y} \rightarrow \mathbb{R}$ by $$c(x,y) \triangleq \sum_i \max \{x_i - y_i, 0 \}.$$
 We notice that, if $u$ satisfies our constraint that $\nabla u \in [0,1]^n$ almost everywhere and is continuous, we have as a consequence that $u(x) - u(y) \leq c(x,y)$ for all $x \in \mathcal{X}, y \in \mathcal{Y}$.
We therefore consider the relaxed problem of maximizing $$\int_{\mathcal{X}_0} u(x) d\mu^{\mathcal{X}} - \int_{\mathcal{Y}} u(y) d\nu^{\mathcal{Y}}$$
subject only to the constraint that $u(x) - u(y) \leq c(x,y)$ for all $x \in \mathcal{X}_0$ and all $y \in \mathcal{Y}$. 
The optimal value of this relaxed program is therefore at least as large as the optimal revenue of the mechanism design program. We hope to identify scenarios in which we can solve the relaxed problem and in which the solution to the relaxed problem satisfies all of the original constraints. Indeed, if the relaxed problem's solution satisfies all original constraints, the solution is also optimal for the original problem.

\section{Dual Relaxed Problem}

Consider the problem
$$\inf \int_{\mathcal{X}_0 \times \mathcal{Y}} c(x,y) d\gamma(x,y) :  \gamma \in \Gamma({\mu}^{\mathcal{X}_0},\nu^{\mathcal{Y}})$$
where $\Gamma({\mu}^{\mathcal{X}_0},\nu^{\mathcal{Y}})$ is the set of all measures on $\mathcal{X}_0 \times \mathcal{Y}$ with marginal measures ${\mu}^{\mathcal{X}_0}$ and $\nu^{\mathcal{Y}}$, respectively.\footnote{That is, $\gamma(A, \mathcal{Y}) = \mu^{\mathcal{X}_0}(A) \nu^{\mathcal{Y}}(\mathcal{Y})$ 
for all measurable $A \subseteq \mathcal{X}_0$, and analogously for the other marginal. } We call this the \emph{dual relaxed problem}. Informally, this problem asks for the minimum cost way of ``transporting mass'' to change the measure $\nu^\mathcal{Y}$ into the measure $\mu^{\mathcal{X}_0}$, and is a continuous analog of the minimum-weight bipartite matching problem. Analogous to Monge-Kantorovich duality from optimal transport theory \cite{Villani}, we have the following theorem, which is a continuous version of the relationship between minimum-weight bipartite matching and its linear programming dual:

\begin{theorem}\label{thm::main}
Suppose that there exist $u^*$, ${\gamma}^*$ feasible for the relaxed problem and the dual relaxed  problem, respectively, such that ${u}^*(x) - {u}^*(y) = c(x,y)$, ${\gamma}^*$-almost surely. Then ${u}^*$ is optimal for the relaxed problem and ${\gamma}^*$ is optimal for the dual relaxed problem.
\end{theorem}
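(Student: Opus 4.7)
The plan is to mirror the classical Monge--Kantorovich duality argument: prove weak duality first, then invoke the hypothesized $\gamma^*$-a.s.\ equality as a complementary-slackness condition to obtain tightness, and from tightness deduce optimality on both sides.

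\textbf{Step 1 (Weak duality).} For any $u$ feasible for the relaxed primal and any $\gamma \in \Gamma(\mu^{\mathcal{X}_0}, \nu^{\mathcal{Y}})$, I would show
\begin{align*}
\int_{\mathcal{X}_0} u \, d\mu^{\mathcal{X}_0} - \int_{\mathcal{Y}} u \, d\nu^{\mathcal{Y}} \;\leq\; \int_{\mathcal{X}_0 \times \mathcal{Y}} c(x,y) \, d\gamma(x,y).
\end{align*}
Using that $\gamma$ has marginals $\mu^{\mathcal{X}_0}$ and $\nu^{\mathcal{Y}}$, rewrite the left-hand side as $\int_{\mathcal{X}_0 \times \mathcal{Y}} \bigl(u(x) - u(y)\bigr) \, d\gamma(x,y)$; the primal constraint $u(x) - u(y) \leq c(x,y)$ then gives the bound by integrating against the nonnegative measure $\gamma$.

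\textbf{Step 2 (Tightness).} Apply Step 1 to the pair $(u^*, \gamma^*)$. By hypothesis the inequality $u^*(x) - u^*(y) \leq c(x,y)$ holds with equality $\gamma^*$-a.s., hence
\begin{align*}
\int_{\mathcal{X}_0} u^* \, d\mu^{\mathcal{X}_0} - \int_{\mathcal{Y}} u^* \, d\nu^{\mathcal{Y}} \;=\; \int_{\mathcal{X}_0 \times \mathcal{Y}} c(x,y) \, d\gamma^*(x,y).
\end{align*}

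\textbf{Step 3 (Optimality of both sides).} For any other primal-feasible $u$, weak duality against $\gamma^*$ gives $\int u \, d\mu^{\mathcal{X}_0} - \int u \, d\nu^{\mathcal{Y}} \leq \int c \, d\gamma^*$, and the latter equals the primal value attained by $u^*$ by Step 2, so $u^*$ is primal-optimal. Symmetrically, for any other dual-feasible $\gamma$, weak duality against $u^*$ gives $\int c \, d\gamma \geq \int u^* \, d\mu^{\mathcal{X}_0} - \int u^* \, d\nu^{\mathcal{Y}} = \int c \, d\gamma^*$, so $\gamma^*$ is dual-optimal.

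The main obstacle I anticipate is technical rather than conceptual: verifying that every integral in play is well-defined and finite so the rewriting in Step 1 is legitimate (no $\infty - \infty$). This requires $u^*$ to be integrable against both $\mu^{\mathcal{X}_0}$ and $\nu^{\mathcal{Y}}$ and $c$ to be $\gamma^*$-integrable. Both should follow from the standing moment hypothesis $\lim_{z_i \to d_i^+} z_i^2 f_i(z_i) = 0$ combined with a priori bounds implied by the cost constraint itself: $u^*(x) - u^*(d_-) \leq c(x, d_-) \leq \sum_i x_i$ and $u^*(d_-) - u^*(y) \leq c(d_-, y) \leq \sum_i d_i^-$, sandwiching $u^*$ between two functions with finite $\mu^{\mathcal{X}_0}$- and $\nu^{\mathcal{Y}}$-integrals; integrability of $c$ against $\gamma^*$ then follows from the marginal integrability of $x \mapsto \sum_i x_i$ and $y \mapsto \sum_i y_i$ under the two marginals.
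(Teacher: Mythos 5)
Your proposal is correct and follows essentially the same complementary-slackness argument as the paper: weak duality via rewriting the primal objective as an integral against $\gamma$, tightness from the $\gamma^*$-a.s.\ equality, and optimality of both sides by comparing against the common value. The only cosmetic difference is that the paper's footnote normalizes the marginals of $\gamma$ by the factor $\nu^{\mathcal{Y}}(\mathcal{Y})$ (so that factor appears in its chain of equalities), whereas you use unscaled marginals; this does not affect the argument, and your added remarks on integrability address a point the paper leaves implicit.
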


\begin{proof}
We have\footnote{Recalling that $\nu^\mathcal{Y}(\mathcal{Y}) = \mu^{\mathcal{X}_0}(\mathcal{X}_0)$.} 
$$\int_{\mathcal{X}_0 \times \mathcal{Y}} c(x,y) d\gamma \geq \int_{\mathcal{X}_0 \times \mathcal{Y}}(u(x)-u(y))d\gamma= \nu^{\mathcal{Y}}(\mathcal{Y})\left( \int_{\mathcal{X}_0} u(x) d{\mu}^{\mathcal{X}_0} - \int_{\mathcal{Y}}u(y)d\nu^{\mathcal{Y}}\right)$$
for any feasible $u$ and $\gamma$. Therefore, the optimal value of the relaxed primal is at most $1/\nu(\mathcal{Y})$ times the optimal value of the relaxed dual. We also have
\begin{align*}
\int_{\mathcal{X}_0} {u}^*(x) d{\mu}^{\mathcal{X}} - \int_{\mathcal{Y}}{u}^*(y)d\nu^{\mathcal{Y}} &= \frac{1}{\nu^{\mathcal{Y}}(\mathcal{Y})}\left( \int_{\mathcal{X}_0 \times \mathcal{Y}}{u}^*(x)d{\gamma}^*  - \int_{\mathcal{X}_0 \times \mathcal{Y}} {u}^*(y)d{\gamma}^* \right)\\
&=  \frac{1}{\nu^{\mathcal{Y}}(\mathcal{Y})} \int_{\mathcal{X}_0 \times \mathcal{Y}} ({u}^*(x)-{u}^*(y)) d{\gamma}^* = \frac{1}{\nu^{\mathcal{Y}}(\mathcal{Y})} \int_{\mathcal{X}_0 \times \mathcal{Y}}c(x,y) d{\gamma}^*.
\end{align*}
Since we have found ${u}^*$ and ${\gamma}^*$ such that the value of the relaxed primal is exactly $1/\nu^{\mathcal{Y}}(\mathcal{Y})$ the value of the relaxed dual, we conclude that ${u}^*$ and ${\gamma}^*$ are both optimal for their respective problems.
\end{proof}
We may sometimes refer to $\gamma^*$ as an optimal \emph{transport map} between $\mathcal{X}_0$ and $\mathcal{Y}$, since it represents the lowest-cost method of transporting mass between the two measure spaces.

\section{Strassen's Theorem and Stochastic Dominance}

Our overall goal is to find $u^*$ and $\gamma^*$ satisfying the conditions of Theorem~\ref{thm::main}, and thereby are optimal for their respective problems. 
However, it may be difficult to explicitly define an appropriate measure $\gamma^* \in \Gamma({\mu}^{\mathcal{X}_0},\nu^{\mathcal{Y}})$. Instead, we will often make use of a theorem of Strassen, which allows us to prove the existence of such $\gamma^*$ by demonstrating that one measure stochastically dominates another. (Informally, measure $\alpha$ ``stochastically dominates'' $\beta$ if $\beta$ can be transformed into $\alpha$ by moving mass only in positive directions.)

\begin{definition}
We denote by $\preceq$ the partial ordering on $\mathbb{R}^n_{\geq 0}$ where $a \preceq b$ if and only if $a_i \leq b_i$ for all $i$. In terms of this partial ordering, we make the following definitions.

A function $f: \mathbb{R}^n_{\geq 0} \rightarrow \mathbb{R}$ is \emph{increasing} if  $a \preceq b \Rightarrow f(a) \leq f(b).$

A set $S \subset \mathbb{R}^n$ is {\em increasing} if $a \in S$ and $a \preceq b$ implies $b \in S$, and {\em decreasing} if  $a \in S$ and $b \preceq a$ implies $b \in S$.
\end{definition}

\begin{definition}
For two measures $\alpha$, $\beta$ on $\mathbb{R}^n_{\geq 0}$, we say that $\alpha$ \emph{stochastically dominates} $\beta$ (with respect to the partial order $\preceq$), denoted $\beta \preceq \alpha$, if
$\int_{\mathbb{R}^n_{\geq 0}} f d\alpha \geq \int_{\mathbb{R}^n_{\geq 0}}f d\beta$
for all increasing bounded measurable functions $f$.\footnote{Throughout the paper whenever we use the terms ``measure'' or ``measurable'' we use them with respect to the Borel $\sigma$-algebra.} Similarly, if $g, h$ are density functions, $h \preceq g$ if $\int_{\vec{x} \in \mathbb{R}^n{_\geq 0}} f(\vec{x}) h(\vec{x}) d\vec{x}    \leq \int_{\vec{x} \in \mathbb{R}^n_{\geq 0}}f(\vec{x}) g(\vec{x}) d\vec{x}$ for all increasing bounded measurable functions $f$.
\end{definition}

We now apply a theorem of Strassen for the partial order $\preceq$, using the formulation from \citeN{Lindvall99} and noting that the set $\{(a,b): a \preceq b\}$ is closed:

\begin{theorem}[Strassen]\label{thm:strassen}
If $\alpha$ and $\beta$ are probability measures on $\mathbb{R}^n_{\geq 0}$ and $\alpha$ stochastically dominates $\beta$ with respect to $\preceq$, then there exists a probability measure $\hat{\gamma} \in \Gamma(\alpha, \beta)$ on $\mathbb{R}^n_{\geq 0} \times \mathbb{R}^n_{\geq 0}$ with marginals $\alpha$ and $\beta$ respectively such that $\hat{\gamma}(\{(a,b): b \preceq a\}) = 1$.
\end{theorem}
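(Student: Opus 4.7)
The plan is to invoke the general form of Strassen's theorem for couplings supported on a closed set, and then verify its hypotheses directly from the stochastic dominance assumption. Specifically, the Polish-space form of Strassen's theorem states that, for probability measures $\alpha,\beta$ on $\mathbb{R}^n_{\geq 0}$ and a closed set $R \subseteq \mathbb{R}^n_{\geq 0}\times \mathbb{R}^n_{\geq 0}$, a coupling $\hat\gamma \in \Gamma(\alpha,\beta)$ with $\hat\gamma(R)=1$ exists if and only if
$$\beta(B) \;\leq\; \alpha\bigl(R^{-1}(B)\bigr) \qquad \text{for every Borel set } B \subseteq \mathbb{R}^n_{\geq 0},$$
where $R^{-1}(B) = \{a : \exists b \in B \text{ with } (a,b)\in R\}$. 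I would apply this with $R \triangleq \{(a,b): b \preceq a\}$, which is closed (as noted in the excerpt). In this case $R^{-1}(B)$ is exactly the upward closure $U(B) \triangleq \{a : \exists b \in B,\; b \preceq a\}$ of $B$.

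Next I would verify the marginal hypothesis using stochastic dominance. The key observation is that $U(B)$ is always an increasing set: if $a \in U(B)$, witnessed by some $b \in B$ with $b \preceq a$, and $a \preceq a'$, then $b \preceq a'$, so $a' \in U(B)$. Therefore $\mathbf{1}_{U(B)}$ is a bounded, measurable, increasing function on $\mathbb{R}^n_{\geq 0}$. Since $B \subseteq U(B)$ trivially,
$$\beta(B) \;\leq\; \beta(U(B)) \;=\; \int \mathbf{1}_{U(B)}\, d\beta \;\leq\; \int \mathbf{1}_{U(B)}\, d\alpha \;=\; \alpha(U(B)),$$
where the middle inequality is precisely the definition of $\beta \preceq \alpha$ applied to the increasing indicator $\mathbf{1}_{U(B)}$. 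This verifies the hypothesis of the general Strassen theorem and therefore produces the desired $\hat\gamma \in \Gamma(\alpha,\beta)$ concentrated on $\{(a,b): b \preceq a\}$.

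The main obstacle, which I would treat as a black box and cite from \citeN{Lindvall99} rather than reprove, is the general Strassen theorem itself. Its standard proofs proceed either via Hahn-Banach separation on the weakly compact convex set $\Gamma(\alpha,\beta)$, using that probabilities supported on a closed set $R$ form a weakly closed convex subset, or equivalently via Kantorovich-Rubinstein duality applied to the cost $c(a,b) = +\infty \cdot \mathbf{1}\{(a,b)\notin R\}$ combined with an approximation argument that replaces $R$ by decreasing sequences of closed neighborhoods. Granting that theorem, the remainder of the argument is the short set-theoretic reduction above: the increasingness of upward closures converts stochastic dominance into exactly the marginal inequality Strassen requires.
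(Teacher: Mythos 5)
Your proposal is correct. Note, however, that the paper does not actually prove this statement: it imports it wholesale from \citeN{Lindvall99}, only remarking that the set $\{(a,b):b\preceq a\}$ is closed so that Lindvall's formulation applies. What you have written is, in effect, the content of that cited reference: you take the general closed-set form of Strassen's theorem (existence of a coupling supported on a closed $R$ iff $\beta(B)\leq\alpha(R^{-1}(B))$) as the black box, and supply the reduction from the partial-order version to it, namely that $R^{-1}(B)$ is the upward closure $U(B)$, that $U(B)$ is an increasing set, and that stochastic dominance applied to the increasing indicator $\mathbf{1}_{U(B)}$ gives exactly the required marginal inequality. This reduction is correct and is the standard derivation. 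One small technical remark: the general theorem only requires the inequality for closed (or open) sets $B$, which is convenient here because for arbitrary Borel $B$ the upward closure $U(B)$ is a priori only analytic; for closed $B\subseteq\mathbb{R}^n_{\geq 0}$ the witnesses $b\preceq a$ lie in the compact box $[0,a_1]\times\cdots\times[0,a_n]$, so $U(B)$ is in fact closed and the measurability issue disappears. Since your verification goes through verbatim for closed $B$, there is no gap; your argument just proves slightly more than is needed while glossing over why $\alpha(U(B))$ is well-defined in the fully general Borel case.
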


The choice of $\preceq$ is justified by the following observation (as will become clear in the next section): if types $x \in \mathcal{X}$ and $y \in \mathcal{Y}$ both receive the grand bundle at price $p$ under a utility function $u$, then $u(x)-u(y)=c(x,y)$ if and only if $y \preceq x$.


\section{Optimality of Grand Bundling}\label{bundlingsection}

Our goal in this section is to identify conditions under which the solution to the relaxed problem is a take-it-or-leave-it offer of the grand bundle for some price $p^*$. In this case, the relaxed solution is clearly also a solution of the original mechanism design instance. (In particular, the utility function is convex.) Our proof of optimality relies on Theorem~\ref{thm::main}.

\begin{definition}
For any $p > 0$, we define $Z_p$ and $W_p$ by
$$Z_p \triangleq \left\{ y \in D \setminus \{d_-\} : \sum y_i \leq p \right\}; \qquad W_p \triangleq D \setminus (Z_p \cup \{d_-\}).$$

\end{definition}

That is, $Z_p$ (along with $d_-$) is the set of types which will receive no goods under the grand bundle price $p$.
We aim to find a price $p$ such that $Z_p \subseteq \mathcal{Y}$ and such that our transport map can send all of $Z_p$ to $\{d_-\}$.
\begin{definition} \label{def: critical price}
We say that $p^* > 0$ is a \emph{critical bundle price} if $Z_{p^*} \subseteq \mathcal{Y}$ and
$$\int_{Z_{p^*}} \nu^{\mathcal{Y}}(y)dy = \int_{\mathcal{Y}} \nu^{\mathcal{Y}}(y)dy  - \int_{\mathcal{X}}\mu^{\mathcal{X}}(x)dx = 1.$$
\end{definition}
Even if a critical bundle price $p^*$ exists, it is not necessarily true that the optimal mechanism is a simple take-it-or-leave-it offer of all goods for price $p^*$. 


\begin{theorem}\label{bundlingthm}
Suppose that there exists a critical bundle price $p^*$ such that $\mu^{\mathcal{X}}|_{W_{p^*}} \succeq \nu^{\mathcal{Y}}|_{W_{p^*}}.$ Then the optimal mechanism is a take-it-or-leave-it offer of the grand bundle for price $p^*$.\footnote{If $\alpha$ is a measure and $S \subset \mathbb{R}^n_{\geq 0}$, then the restriction $\alpha|_S$ is the measure such that $\alpha|_S(A) \triangleq \alpha(A \cap S)$ for all measurable $A$.}
\end{theorem}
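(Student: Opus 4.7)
The plan is to apply Theorem~\ref{thm::main} with the candidate utility function
$$u^*(z) \triangleq \max\Bigl\{0,\ \sum_{i} z_i - p^*\Bigr\},$$
which is exactly the utility delivered by the take-it-or-leave-it offer of the grand bundle at price $p^*$. This $u^*$ is nonnegative, nondecreasing, convex and continuous with $\nabla u^* \in \{0,\vec{1}\} \subseteq [0,1]^n$ a.e., and $u^*(d_-)=0$ (note that $\sum_i d_i^-\le p^*$, since otherwise $Z_{p^*}$ would be empty and could not satisfy the critical-price condition $\int_{Z_{p^*}}\nu^{\mathcal Y}=1$). Hence once we prove $u^*$ optimal for the relaxed problem, it will automatically be feasible, and therefore optimal, for the original mechanism design problem.

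First I would check feasibility of $u^*$ for the relaxed problem, i.e. $u^*(x)-u^*(y)\le c(x,y)$ for all $x\in\mathcal{X}_0$, $y\in\mathcal{Y}$. This is a short case analysis on whether $\sum x_i$ and $\sum y_i$ exceed $p^*$: in every case the inequality reduces to $\sum_i(x_i-y_i)\le\sum_i\max\{x_i-y_i,0\}$ or to $0\le c(x,y)$.

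Next I would build a dual witness $\gamma^*\in\Gamma(\mu^{\mathcal{X}_0},\nu^{\mathcal{Y}})$ as the sum of two pieces. Since $Z_{p^*}\subseteq\mathcal{Y}$, we have $\mathcal{X}\subseteq W_{p^*}$, and so
$$\mu^{\mathcal{X}}|_{W_{p^*}}(\mathbb{R}_{\ge 0}^n)=\mu^{\mathcal{X}}(\mathcal{X}),\qquad \nu^{\mathcal{Y}}|_{W_{p^*}}(\mathbb{R}_{\ge 0}^n)=\nu^{\mathcal{Y}}(\mathcal{Y})-\nu^{\mathcal{Y}}(Z_{p^*})=\mu^{\mathcal{X}}(\mathcal{X}),$$
using Claim~\ref{equalsone} and the critical-price identity. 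After normalizing these to probability measures (which preserves the stochastic dominance hypothesis), Strassen's theorem (Theorem~\ref{thm:strassen}) yields a coupling $\hat\gamma^W$ of $\mu^{\mathcal{X}}|_{W_{p^*}}$ and $\nu^{\mathcal{Y}}|_{W_{p^*}}$ supported on $\{(x,y):y\preceq x\}$. For the remaining mass, define $\gamma^{Z}$ to concentrate the atom at $(d_-,\cdot)$, coupling $\mu^{\mathcal{X}_0}|_{\{d_-\}}$ (which has total mass $1$) with $\nu^{\mathcal{Y}}|_{Z_{p^*}}$ (also total mass $1$ by the critical-price definition). Set $\gamma^*\triangleq \gamma^Z+\hat\gamma^W$. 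By construction the marginals are $\mu^{\mathcal{X}_0}$ and $\nu^{\mathcal{Y}}$.

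The final step is to verify the complementary slackness $u^*(x)-u^*(y)=c(x,y)$ for $\gamma^*$-a.e.\ $(x,y)$. On the support of $\gamma^Z$ we have $x=d_-$ and $y\in Z_{p^*}$, so $u^*(d_-)=u^*(y)=0$ and $c(d_-,y)=\sum_i\max\{d_i^--y_i,0\}=0$ since $y_i\ge d_i^-$. On the support of $\hat\gamma^W$, both $x,y\in W_{p^*}$, so $\sum x_i,\sum y_i>p^*$, which gives $u^*(x)-u^*(y)=\sum_i(x_i-y_i)$; and Strassen guarantees $y\preceq x$, which gives $c(x,y)=\sum_i(x_i-y_i)$ as well. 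Theorem~\ref{thm::main} then certifies optimality of $u^*$ for the relaxed program, and by the remarks in the first paragraph also for the original program, completing the proof.

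The main obstacle I expect is the bookkeeping to assemble $\gamma^*$ correctly across the two pieces: matching the total masses to invoke Strassen on $W_{p^*}$, and handling the atom at $d_-$ in a way that is consistent with the definition of $\mu^{\mathcal{X}_0}$; everything else (feasibility of $u^*$ and the case analysis for complementary slackness) is routine.
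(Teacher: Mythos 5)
Your proposal is correct and follows essentially the same route as the paper's proof: the same candidate utility $u^*(z)=\max\{0,\sum_i z_i - p^*\}$, the same two-piece dual witness (a Strassen coupling on $W_{p^*}$ supported on $\{y\preceq x\}$ plus the atom at $d_-$ matched to $Z_{p^*}$), and the same invocation of Theorem~\ref{thm::main}. The extra bookkeeping you supply (mass matching and explicit feasibility of $u^*$) is consistent with, and slightly more detailed than, the paper's argument.
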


\begin{proof}
Suppose that there exists such a critical bundle price $p^*$. By Strassen's theorem, since $\mu^{\mathcal{X}}|_{W_{p^*}} \succeq \nu^{\mathcal{Y}}|_{W_{p^*}}$, there exists a transport map $\gamma_1^* \in \Gamma(\mu^{\mathcal{X}}|_{W_{p^*}}, \nu^{\mathcal{Y}}|_{W_{p^*}})$ such that, for  $x \in W_{p^*} \cap \mathcal{X}$ and $y \in W_{p^*} \cap \mathcal{Y}$, it holds that $x \succeq y$, $\gamma_1^*$ almost-surely. Since $x$ and $y$ are in $W_{p^*}$, a bidder of either type receives the grand bundle for price $p^*$. Thus, under the utility $u^*$ of the bundling mechanism:
$$u^*(x) - u^*(y) = (\sum x_i - p^*) - (\sum y_i - p^*) = \sum_i (x_i - y_i) = c(x,y),$$
where the final equality follows from $x \succeq y$.
We now extend $\gamma_1^*$ to a transport map $\gamma^* \in (\mu^{\mathcal{X}_0},\nu^\mathcal{Y})$ by mapping between $d_- \in \mathcal{X}_0$ and all of $Z_{p^*} \subseteq \mathcal{Y}$. Indeed, such a map exists since $\mu^{\mathcal{X}_0}(\{d_-\}) = \nu^\mathcal{Y}(Z_{p^*})$. We notice that any bidder of type $y \in Z_{p^*}$ or $d_-$ receives zero utility under $u^*$. Thus, we have
$u^*(d_-) - u^*(y) = 0 = c(d_-,y),$
since $d_- \preceq y$.

The existence of the transport map $\gamma^*$ proves that the bundling utility function $u^*$ is optimal for the relaxed problem, by Theorem~\ref{thm::main}. Since $u^*$ clearly satisfies all of the original mechanism design constraints, it is indeed the utility function of the optimal mechanism.
\end{proof}

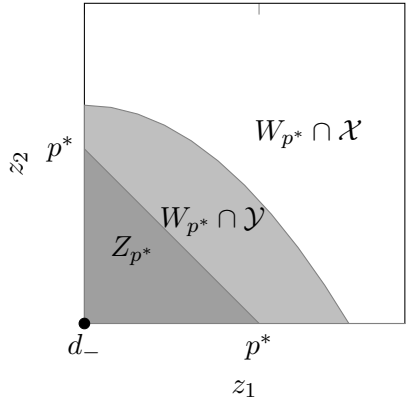
\begin{figure}
\begin{center}
\begin{tikzpicture}
\begin{axis}[width=2.3in, height=2.3in,ymin=1.1, ymax=3.3, xmin=0, xmax=1, xlabel=$z_1$, ylabel=$z_2$,  ytick pos=left, ytick={2.3}, yticklabels={$p^*$}, xtick={0, .545},xticklabels={$d_-$,$p^*$}]
  \addplot+[color=gray, fill=gray!50, domain=0:2,mark=none]
 {2.6-2.2*x*x}
 \closedcycle;
\addplot+[color=gray,fill=gray!75,domain=0:2,mark=none]
{2.3-2.2*x}
\closedcycle;
\addplot[color=black, mark=*] coordinates{
(0,1.1)};
\node at (axis cs:0.4,1.8){$W_{p^*} \cap \mathcal{Y}$};
\node at (axis cs:0.7,2.4){$W_{p^*} \cap \mathcal{X}$};
\node at (axis cs:0.15,1.6){$Z_{p^*}$};
\end{axis}
\end{tikzpicture}
\end{center}
\vspace{-.5cm}
\caption{The price $p^*$ is a critical bundle price if $Z_{p^*} \subseteq \mathcal{Y}$ and $\int_{Z_{p^*}} \nu^{\mathcal{Y}}(y)dy = \mu^{\mathcal{X}_0}(\{ d_- \})$. If $p^*$ is a critical bundle price and $\mu^{\mathcal{X}}|_{W_{p^*}} \succeq \nu^{\mathcal{Y}}|_{W_{p^*}}$, then a take-it-or-leave-it offer of all items for $p^*$ is the optimal mechanism. While this diagram is drawn with $n=2$, the result holds for arbitrary $n$.}
\end{figure}

It is oftentimes difficult to verify directly that $\mu^{\mathcal{X}}|_{W_{p^*}} \succeq \nu^{\mathcal{Y}}|_{W_{p^*}}$. In two dimensions, we will make use of Theorem~\ref{regionthm}, which provides a sufficient condition for a measure to stochastically dominate another. This theorem is an application of the lemma (proven in the Appendix) that an equivalent condition for dominance is that one measure has more mass than the other on all sets which are the union of \textit{finitely many} ``increasing boxes.'' Under the conditions of Theorem~\ref{regionthm}, we are able to induct on the number of boxes by removing one box at a time. For application to Theorem~\ref{bundlingthm}, we only need to use Theorem~\ref{regionthm} in the special case that $\mathcal{C}=D$ and $R = Z_{p^*}$, but in Section~\ref{exponentialsolution} we will use the more general form of the theorem. The proof of Theorem~\ref{regionthm} is in the Appendix.

Informally, Theorem~\ref{regionthm} deals with the scenario where two {density functions,} $g$ and $h$, are both nonzero only on some set $\mathcal{C} \setminus R$, where $R$ is a decreasing subset of $\mathcal{C}$. To prove that $g \succeq h$, it suffices to verify that (1) $g-h$ has an appropriate form (2) the integral of $g-h$ on $\mathcal{C}$ is \emph{positive} and (3) if we integrate $g-h$ along either a vertical or horizontal line outwards starting from any point in $R$, the result is \emph{negative}.

\begin{theorem}\label{regionthm}
Let $\mathcal{C} = [c_1, d_1^+) \times [c_2, d_2^+)$, $R$ be a decreasing nonempty subset of $\mathcal{C}$, and $g, h: \mathcal{C} \rightarrow \mathbb{R}_{\geq 0}$ be {bounded density functions which are $0$ on $R$, have finite total mass, and satisfy}
\begin{itemize}
	\item $\int_{\mathcal{C}} (g - h)dxdy \geq 0$.
	\item For any basis vector $e_i \in \{(0,1),(1,0) \}$ and any point $z^* \in R$:
	$$\int_0^{d_i^+ - z^*_i} g(z^* + \tau e_i) - h(z^* + \tau e_i)d\tau \leq 0.$$
	\item There exist non-negative functions $\alpha: [c_1, d_1^+) \rightarrow \mathbb{R}_{\geq 0}$, $\beta : [c_2, d_2^+) \rightarrow \mathbb{R}_{\geq 0}$ and an increasing function $\eta : \mathcal{C} \rightarrow \mathbb{R}$ such that
	$$g(z_1,z_2) - h(z_1,z_2) = \alpha(z_1)\cdot \beta(z_2) \cdot \eta(z_1,z_2)$$
	for all $(z_1,z_2) \in \mathcal{C} \setminus R$.
\end{itemize}
Then $g \succeq h$.
\end{theorem}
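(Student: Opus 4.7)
My plan is to invoke the appendix lemma mentioned in the text, reducing the goal $g \succeq h$ to the assertion that $\int_S (g-h)\,dz_1\,dz_2 \geq 0$ for every $S$ that is a finite union of increasing boxes $B_i = [a_i,d_1^+) \times [b_i,d_2^+)$. I would then induct on the number of boxes $k$, with trivial base case $k = 0$ where $S = \emptyset$.

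For the base case $k = 1$, I use only Conditions~1 and~3. Writing $g - h = \alpha(z_1)\,\beta(z_2)\,\eta(z_1,z_2)$ with $\eta$ increasing, the marginal $J(z_2) \triangleq \int_{c_1}^{d_1^+} \alpha(w)\eta(w,z_2)\,dw$ is increasing in $z_2$, since $\alpha \geq 0$ and $\eta$ is increasing in $z_2$. Condition~1 reads $\int_{c_2}^{d_2^+} \beta(z_2) J(z_2)\,dz_2 \geq 0$, and the monotonicity of $J$ together with $\beta \geq 0$ propagates this to $\int_{[c_1,d_1^+) \times [b,d_2^+)}(g-h)\,dz_1\,dz_2 \geq 0$ for every $b$: if $J(b) \geq 0$ then $\beta J$ is non-negative on $[b,d_2^+)$, while if $J(b) < 0$ then $J$ is non-positive on $[c_2,b]$, so removing that non-positive contribution from Condition~1's integral only increases it. A symmetric argument in the $z_1$-variable, exploiting the monotonicity in $z_1$ of $\psi_b(z_1) \triangleq \int_b^{d_2^+} \beta(z_2)\eta(z_1,z_2)\,dz_2$, upgrades this to an arbitrary increasing box $B = [a,d_1^+) \times [b,d_2^+)$.

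For the inductive step from $k-1$ to $k$ boxes, I sort and prune the $B_i$ so that $a_1 < a_2 < \cdots < a_k$ and $b_1 > b_2 > \cdots > b_k$, viewing $S$ as a staircase. I then decompose $S$ as the disjoint union $T_1 \sqcup (B_2 \cup \cdots \cup B_k)$, where $T_1 = [a_1, a_2) \times [b_1, d_2^+)$ is the leftmost vertical strip. The inductive hypothesis bounds $\int_{B_2 \cup \cdots \cup B_k}(g-h)$ from below, so it remains to control $\int_{T_1}(g-h)$. Here Condition~2 enters: applied at a point $z^* \in R$ lying ``below'' $T_1$ in the partial order, it yields non-positive line integrals of $g-h$ along rightward and upward rays from $z^*$, and through the factorization of Condition~3 and the monotonicity of $\psi_{b_1}$, these translate into a lower bound on $\int_{T_1}$.

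The main obstacle is this inductive step: $T_1$ is not itself an increasing box, and direct computation confirms that $\int_{T_1}(g-h)$ can be strictly negative, so the induction cannot merely add two non-negative pieces but must use Condition~2 to offset any strip deficit against the positivity of the smaller staircase. A strengthened inductive hypothesis may be needed to carry through the accounting, and a case split will likely be required based on whether $R$ extends in the $z_2$-direction as high as $b_1$: when it does, Condition~2 yields a direct bound at that height; when it does not, one first propagates information through the monotonicity of $\psi_{b_1}$ before combining with the inductive hypothesis.
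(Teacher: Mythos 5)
Your high-level strategy---reduce to finite unions of increasing boxes via the appendix lemma and induct on the number of boxes---is exactly the paper's, but both halves of your induction have real problems. In the base case, you identify the horizontal marginal of $g-h$ at height $z_2$ with $\beta(z_2)J(z_2)$ where $J(z_2)=\int_{c_1}^{d_1^+}\alpha(w)\eta(w,z_2)\,dw$, and conclude it is $\beta$ times an increasing function. This is false whenever the horizontal line meets $R$: the factorization $g-h=\alpha\beta\eta$ holds only on $\mathcal{C}\setminus R$, while on $R$ one has $g-h\equiv 0$ even though $\alpha\beta\eta$ need not vanish there (in the paper's applications $\eta<0$ on much of $R$). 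So the true marginal is $\beta(z_2)\int_{\{w:(w,z_2)\notin R\}}\alpha(w)\eta(w,z_2)\,dw$, whose domain of integration varies with $z_2$, and it is not monotone in general. Consequently Condition~1 does not ``read'' $\int\beta J\geq 0$, and your claim to need only Conditions~1 and~3 for the base case cannot be right: Condition~2 is essential precisely to control lines that start inside $R$. The paper handles this with a preliminary claim---if $(a,w)\in R$ then $\zeta_a^b(w)\triangleq\int_a^b(g-h)(z_1,w)\,dz_1\leq 0$---proved by locating the first point of the segment where $g>h$ (necessarily outside $R$), noting that $g-h\geq 0$ from there rightward by monotonicity of $\eta$, and then invoking Condition~2 on the full ray.

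The inductive step you propose is, by your own admission, not a proof: you observe that the strip $T_1$ can have strictly negative integral and defer to ``a strengthened inductive hypothesis'' and unspecified ``accounting.'' The missing idea is a single-sign-change lemma: for fixed $a\leq b$, if $\zeta_a^b(w^*)>0$ then $\zeta_a^b(w)\geq 0$ for all $w\geq w^*$ (this follows from the factorization, monotonicity of $\eta$, and the claim above). This gives a clean dichotomy at the newest corner $z^{(k+1)}$ of the staircase: either the horizontal slices $\zeta_{z_1^{(k+1)}}^{z_1^{(k)}}(w)$ are all $\leq 0$ for $w\leq z_2^{(k+1)}$, in which case you may \emph{add} the (non-positive) rectangle between heights $z_2^{(k)}$ and $z_2^{(k+1)}$, merging the new corner into base $B_{(z_1^{(k+1)},z_2^{(k)})}$ and reducing to $k$ bases; or they are all $\geq 0$ for $w\geq z_2^{(k+1)}$, in which case you may \emph{delete} the (non-negative) vertical strip above the new corner and again reduce to $k$ bases. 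No strengthening of the hypothesis is needed. The same dichotomy, applied twice (once per coordinate), also closes the base case. Without this lemma your plan does not go through as written.
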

When we prove optimality of grand bundling, we will apply Theorem~\ref{regionthm} with $g$ and $h$ being the densities of $\mu^{\mathcal{X}}|_{W_{p^*}}$ and $\nu^{\mathcal{Y}}|_{W_{p^*}}$, respectively.

\section{Numerical Example: Bundling Two Power-Law Items}\label{powerlaw}
In this section, we derive, as an application of Theorem~\ref{bundlingthm}, the optimal mechanism for an instance of selling two goods which are distributed according to power-law distributions. We exhibit that the optimal mechanism is a take-it-or-leave-it offer of the grand bundle, and we find the right price to charge for it. Unlike our example in the next section for exponentially distributed items (where we perform our calculations for arbitrary pairs of exponential distributions), here we demonstrate how numerical computations can be used to prove optimality of grand bundling in a single instance. Indeed, since the integrals involved in our computations may be complicated, we suspect that this numerical approach will frequently be useful.

We proceed with our goal of deducing the optimal mechanism for selling two goods, where the distribution of each good has probability density function $f_i(z_i) = \frac{c_i - 1}{(1+z_i)^{c_i}}$ for $z_i \in [0,\infty)$. In this example, we fix $c_1 = 6$ and $c_2 = 7$.


%
%
%
\subsection{Numerically Computing the Critical Bundle Price}
We begin by computing the critical bundle price $p^*$, as prescribed by Definition~\ref{def: critical price}. In particular, we first solve for $p^*$ such that 
$$\int_0^{p^*} \int_0^{p^*-z_2} \left(3 - \frac{c_1 z_1}{1+z_1} - \frac{c_2 z_2}{1+z_2} \right)\frac{c_1 - 1}{(1+z_1)^{c_1}}\cdot \frac{c_2 - 1}{(1+z_2)^{c_2}} dz_1 dz_2=1$$
and numerically determine that $p^* \approx .35725$. Thus, if there indeed exists a critical bundle price, it must have value $p^*$.

Next, to confirm that $p^*$ is a critical bundle price, we must verify that $Z_{p^*} \subseteq {\cal Y}$. That is, we must show that for all $z$ with $z_1 + z_2 \leq p^*$, it holds that 
$\frac{c_1z_1}{1+z_1} + \frac{c_2 z_2}{1+z_2} \leq 3.$

Since the left-hand side of the inequality is an increasing function, it suffices to prove the inequality when $z_1 + z_2 = p^*$. Substituting for $z_2$, the left-hand side of the above inequality becomes
$\frac{c_1z_1}{1+z_1} + \frac{c_2 p^* -c_2 z_1}{1+p^* - z_1}.$

We numerically compute, after setting $c_1 = 6$ and $c_2=7$, that the expression is maximized by $z_1 = 0.133226$, achieving value $1.98654$. Since $1.98654$ is significantly less than $3$, we conclude that $p^*$ is indeed a critical price, even taking into consideration possible errors of precision.

\subsection{Verifying Stochastic Dominance}
Given Theorem~\ref{bundlingthm}, all that remains to prove optimality of grand bundling is to verify that $\mu^{\mathcal{X}}|_{W_{p^*}} \succeq \nu^{\mathcal{Y}}|_{W_{p^*}}.$ We prove this stochastic dominance using Theorem~\ref{regionthm} with $g = \mu_d^\mathcal{X}|_{W_{p^*}}$ and $h = \nu_d^\mathcal{Y}|_{W_{p^*}}$. Indeed, Theorem~\ref{regionthm} applies in this case, and the verification of stochastic dominance is in Appendix~\ref{powerappendix}.
In conclusion, we obtain:

\begin{example}
The optimal mechanism for selling two independent goods with densities $f_1(z_1) = 5/(1+z_1)^6$ and $f_2(z_2) = 6/(1+z_2)^7$ respectively is a take-it-or-leave-it offer of the bundle of the two goods for price $p^* \approx .35725$.
\end{example}

\section{Complete Solution for Two Exponential Items}\label{exponentialsolution}
In this section we provide a closed-form description of the optimal mechanism for two independent exponentially distributed items. The description of the optimal mechanism is given by Theorem~\ref{twoitems}. In this case, the optimal mechanism has richer structure than only offering the grand bundle at some price.

In the next subsections, when it does not provide significant additional complications, we perform  computations for $n$ exponentially distributed items, as these calculations may prove useful in extensions of our result. We denote by $\lambda_i$ the parameters of the exponential distributions, and by $\lambda_{\min} = \min_i \lambda_i$ and $\lambda_{\max} = \max_i \lambda_i$

\subsection{The Critical Price $p^*$}

\begin{definition}
For any $0 < p \leq 2/\lambda_{\min}$, we define the \emph{zero space}, $Z'_p \subset \mathcal{Y} \cup \{\vec{0}\}$, to be
$$Z'_p \triangleq \left\{ y \in \mathbb{R}^2_{\geq 0} : \sum y_i \leq p \textrm{ and } \sum \lambda_i y_i \leq 2 \right\}.$$
\end{definition}
See Figure~\ref{fig: exp items first} for an example of a zero space. 
In terms of $Z'_p$ we make the following definition.
\begin{definition} \label{def: critical price 2}
For all $\lambda_1, \lambda_2 > 0$, the {\em critical price} $p^*=p^*(\lambda_1,\lambda_2)$ is the unique $0 < p^* \leq 2/\lambda_{\min}$ such that $$\nu^{\mathcal{Y}}(Z'_{p^*}) = \int_{\mathcal{Y}}d\nu^{\mathcal{Y}}(y) - \int_{\mathcal{X}} d\mu^{\mathcal{X}}(x) = 1.$$
\end{definition}
Note that the critical price differs from the critical bundle price defined in Section~\ref{bundlingsection} in that the critical price is defined with respect to the space $Z'_p$ instead of $Z_p$.

\begin{claim} \label{claim:critical price}
For all $\lambda_1, \lambda_2 > 0$, the critical price is well-defined.
\end{claim}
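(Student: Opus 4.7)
The strategy is to show that the function $F(p) \triangleq \nu^{\mathcal{Y}}(Z'_p)$, defined on $p \in (0, 2/\lambda_{\min}]$, is continuous and strictly increasing, with $F(p) \to 0$ as $p \to 0^+$ and $F(2/\lambda_{\min}) > 1$. Existence of $p^*$ then follows from the intermediate value theorem, and uniqueness from strict monotonicity.

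First I would verify continuity and monotonicity. The map $p \mapsto Z'_p$ is monotone under set inclusion, since increasing $p$ only relaxes the constraint $y_1 + y_2 \leq p$ while leaving $\lambda_1 y_1 + \lambda_2 y_2 \leq 2$ unchanged; hence $F$ is non-decreasing. Continuity follows from dominated convergence, since $\nu_d^{\mathcal{Y}}$ is bounded and integrable and $Z'_{p_n} \triangle Z'_p$ has Lebesgue measure tending to zero as $p_n \to p$. For strict monotonicity on the relevant interval, note that whenever $0 < p_1 < p_2 \leq 2/\lambda_{\min}$, the set $Z'_{p_2} \setminus Z'_{p_1}$ contains a two-dimensional open sliver on which $\nu_d^{\mathcal{Y}}(y) = \lambda_1 \lambda_2 (3 - \lambda_1 y_1 - \lambda_2 y_2) e^{-\lambda_1 y_1 - \lambda_2 y_2}$ is strictly positive, because $\lambda_1 y_1 + \lambda_2 y_2 \leq 2 < 3$ throughout $Z'_{p_2}$.

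Next I evaluate the endpoints. Clearly $F(p) \to 0$ as $p \to 0^+$, since $Z'_p$ shrinks to $\{\vec{0}\}$, which has Lebesgue measure zero. At the right endpoint, the key observation is that $\lambda_1 y_1 + \lambda_2 y_2 \leq 2$ already forces $\lambda_{\min}(y_1 + y_2) \leq 2$, i.e.\ $y_1 + y_2 \leq 2/\lambda_{\min}$; therefore $Z'_{2/\lambda_{\min}} = \{y \in \mathbb{R}^2_{\geq 0} : \lambda_1 y_1 + \lambda_2 y_2 \leq 2\}$. The change of variables $u_i = \lambda_i y_i$ (which cancels the $\lambda_1 \lambda_2$ prefactor), followed by slicing along $s = u_1 + u_2$, reduces the integral to $\int_0^2 (3-s) s e^{-s} ds$, a routine calculation giving $1 + e^{-2} > 1$.

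Combining these ingredients, the intermediate value theorem applied to the continuous function $F$ (which passes from $0$ near $p = 0$ to $1 + e^{-2}$ at $p = 2/\lambda_{\min}$) produces a point $p^* \in (0, 2/\lambda_{\min})$ with $F(p^*) = 1$; strict monotonicity makes it unique. The side condition $Z'_{p^*} \subseteq \mathcal{Y} \cup \{\vec 0\}$ implicit in Definition~\ref{def: critical price 2} is automatic, since every $y \in Z'_{p}$ for $p \leq 2/\lambda_{\min}$ satisfies $\lambda_1 y_1 + \lambda_2 y_2 \leq 2 < 3 = n+1$. There is no genuine obstacle in this argument; the only point worth attention is making sure the upper endpoint $2/\lambda_{\min}$ (as opposed to, say, $2/\lambda_{\max}$) is large enough for $F$ to exceed $1$, and the explicit value $1 + e^{-2}$ confirms this with room to spare regardless of the particular $\lambda_1, \lambda_2$.
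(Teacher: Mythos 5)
Your proposal is correct and follows essentially the same route as the paper: observe that $p \mapsto \nu^{\mathcal{Y}}(Z'_p)$ is strictly increasing and vanishes at $p=0$, then check that $Z'_{2/\lambda_{\min}}$ coincides with $\{y:\lambda_1y_1+\lambda_2y_2\le 2\}$ and compute its $\nu^{\mathcal{Y}}$-mass to be $1+e^{-2}>1$ (the paper does this by direct iterated integration, you by a change of variables, arriving at the same value). The extra details you supply---continuity of $F$, the explicit strict-monotonicity argument, and the check that $Z'_{p^*}\subseteq\mathcal{Y}\cup\{\vec 0\}$---are points the paper leaves implicit, and they are all handled correctly.
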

\begin{proof}
We need to show that there is a unique solution to the equation defining the critical price. 

We note that $\nu^{\mathcal{Y}}(Z'_p)$ is strictly increasing in $p$ for $p$ in the range $[0,2/\lambda_{\min}]$, and that $\nu^{\mathcal{Y}}(Z_0) = 0$. Therefore, all that remains is to show that $\nu^{\mathcal{Y}}(Z'_{2/\lambda_{\min}}) \geq 1.$ We note that
$Z'_{2/\lambda_{min}} = \left\{ y : \sum \lambda_i y_i \leq 2 \right\},$
and we now compute
\begin{align*}
\int_{Z'_{2/\lambda_{\min}}}d\nu^{\mathcal{Y}}(y) &= {\lambda_1\lambda_2}\int_0^{\frac{2}{\lambda_2}}\int_0^{\frac{2}{\lambda_1}-\frac{\lambda_2 y_2}{\lambda_1}}(3-\lambda_1y_1-\lambda_2y_2)e^{-\lambda_1 y_1 - \lambda_2 y_2}dy_1dy_2\\
&= {\lambda_1 \lambda_2}\int_0^{2/\lambda_2}\frac{e^{-\lambda_2 y_2}(2-\lambda_2y_2)dy_2}{\lambda_1} = {\lambda_1\lambda_2}\cdot \frac{1+\frac{1}{e^2}}{\lambda_1\lambda_2} >
1,
\end{align*}
as desired.
\end{proof}

Our goal in the remainder of this section is to prove the following theorem:
\begin{theorem}\label{twoitems}
For all $\lambda_1 \geq \lambda_2 > 0$, the optimal utility function is the following:
$$
u_{p^*}(z_1,z_2) =
\begin{cases}
0 & \mbox{ if } z_1+z_2 \leq p^* \textrm{ and } \lambda_1z_1+\lambda_2z_2 \leq 2;\\
z_1+z_2\frac{\lambda_2}{\lambda_1} - \frac{2}{\lambda_1} & \mbox{ if }z_2\left(\frac{\lambda_1 - \lambda_2}{\lambda_1} \right) \leq p^* - \frac{2}{\lambda_1} \textrm{ and } \lambda_1z_1 + \lambda_2z_2 > 2;\\
z_1+z_2 - p^* & \mbox{otherwise,}
\end{cases}
$$where $p^*=p^*(\lambda_1,\lambda_2)$ is the critical price of Definition~\ref{def: critical price 2}.
In particular, the optimal mechanism offers the following menu:
\begin{enumerate}
\item Receive nothing, pay 0.
\item Receive the first item with probability 1 and the second item with probability $\lambda_2/\lambda_1$, pay $2/\lambda_1$.
\item Receive both items, pay $p^*$.
\end{enumerate}
\end{theorem}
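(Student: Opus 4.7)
The plan is to apply the complementary slackness criterion of Theorem~\ref{thm::main} by exhibiting, together with $u^* = u_{p^*}$, a transport map $\gamma^* \in \Gamma(\mu^{\mathcal{X}_0},\nu^{\mathcal{Y}})$ under which $u^*(x)-u^*(y)=c(x,y)$ $\gamma^*$-almost surely. I would first check feasibility of $u^*$ for the \emph{original} program: $u^*$ is the pointwise maximum of the three affine pieces $0$, $z_1 + (\lambda_2/\lambda_1)z_2 - 2/\lambda_1$, and $z_1 + z_2 - p^*$, so it is automatically convex, nondecreasing, and continuous; moreover the three gradients $(0,0),(1,\lambda_2/\lambda_1),(1,1)$ all lie in $[0,1]^2$ since $\lambda_2 \le \lambda_1$, and the three pieces agree on the common seams $\{\lambda_1 z_1 + \lambda_2 z_2 = 2\}$, $\{z_1+z_2=p^*\}$ and $\{z_2 = z_2^{\text{bnd}}\}$, where $z_2^{\text{bnd}} := (\lambda_1 p^* - 2)/(\lambda_1-\lambda_2)$. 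Thus, once $u^*$ is shown to be optimal for the \emph{relaxed} program, it also solves the original mechanism-design problem and induces the menu stated in the theorem.

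Next, I would partition $D \setminus \{d_-\}$ into the three menu regions $Z' = Z'_{p^*}$ (option 1), $P = \{\lambda_1 z_1 + \lambda_2 z_2 > 2,\, z_2 \le z_2^{\text{bnd}}\}$ (option 2), and $B = D \setminus (Z' \cup P)$ (option 3), and write $\mathcal{X}_P = \mathcal{X}\cap P$, $\mathcal{Y}_P = \mathcal{Y}\cap P$, $\mathcal{X}_B = \mathcal{X}\cap B$, $\mathcal{Y}_B = \mathcal{Y}\cap B$. The transport map $\gamma^*$ will be the sum of three couplings: (i) the atom at $d_- = \vec 0$ coupled with $\nu^{\mathcal{Y}}|_{Z'}$, which is mass-feasible because $\mu^{\mathcal{X}_0}(\{d_-\}) = 1 = \nu^{\mathcal{Y}}(Z')$ by Definition~\ref{def: critical price 2}; (ii) $\mu^{\mathcal{X}}|_{\mathcal{X}_P}$ coupled with $\nu^{\mathcal{Y}}|_{\mathcal{Y}_P}$ \emph{horizontally}, i.e.\ preserving the second coordinate; (iii) $\mu^{\mathcal{X}}|_{\mathcal{X}_B}$ coupled with $\nu^{\mathcal{Y}}|_{\mathcal{Y}_B}$ via Strassen's theorem so that $x \succeq y$ $\gamma^*$-a.s. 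The complementary slackness identity $u^*(x)-u^*(y)=c(x,y)$ holds trivially in (i) (both sides vanish); in (ii) we compute $u^*(x)-u^*(y) = (x_1-y_1) + (\lambda_2/\lambda_1)(x_2-y_2) = x_1-y_1 = c(x,y)$ using $x_2=y_2$ and the fact that on every horizontal slice $\mathcal{X}_P$ lies strictly to the right of $\mathcal{Y}_P$; and in (iii) $u^*(x)-u^*(y) = (x_1-y_1)+(x_2-y_2) = c(x,y)$ since $x \succeq y$.

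Two substantive verifications remain. For (ii), I would show that along every slice $z_2 = t$ with $t \in [0, z_2^{\text{bnd}}]$ the masses of $\mu^{\mathcal{X}}_d$ on $\mathcal{X}_P \cap \{z_2=t\}$ and of $\nu^{\mathcal{Y}}_d$ on $\mathcal{Y}_P \cap \{z_2=t\}$ agree; the substitution $u = \lambda_1 z_1 + \lambda_2 t - 3$ reduces both to elementary Gamma-type integrals and each slice mass equals $\lambda_2 e^{-3}$, independent of $t$. For (iii), I would establish $\mu^{\mathcal{X}}|_{\mathcal{X}_B} \succeq \nu^{\mathcal{Y}}|_{\mathcal{Y}_B}$ using Theorem~\ref{regionthm} with $\mathcal{C} = [0,\infty)\times[z_2^{\text{bnd}},\infty)$ and $R = \{z \in \mathcal{C} : z_1 + z_2 \le p^*\}$, so that $\mathcal{C}\setminus R = B$ and $R$ is decreasing in $\mathcal{C}$. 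The product-form hypothesis holds with $\alpha(z_1)=\lambda_1 e^{-\lambda_1 z_1}$, $\beta(z_2)=\lambda_2 e^{-\lambda_2 z_2}$, $\eta(z_1,z_2)=\lambda_1 z_1+\lambda_2 z_2 - 3$ (increasing); the total-mass hypothesis follows from (ii) combined with Claim~\ref{equalsone}; and the outward-integral condition reduces, after one integration by parts, to checking that $(\lambda_1-\lambda_2)(z_2^{\text{bnd}} - z_2^*)\le 0$ for horizontal rays and $\lambda_2 p^* + z_1^*(\lambda_1-\lambda_2) - 2 \le 0$ for vertical rays from $z^* \in R$, both immediate from $z_2^* \ge z_2^{\text{bnd}}$, $z_1^*+z_2^* \le p^*$, and $p^* \le 2/\lambda_2$. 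I expect the main obstacle to be the correct choice of the rectangle $\mathcal{C}$ in (iii)---picking $\mathcal{C} = [0,\infty)^2$ fails because $Z' \cup P$ is not a decreasing subset---whereas the horizontal-slice identity in (ii), once stated, is just a one-variable computation.
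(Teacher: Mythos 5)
Your proposal is correct and follows essentially the same route as the paper: the same three-part decomposition of the transport map (the atom at $\vec 0$ coupled to $Z'_{p^*}$, a second-coordinate-preserving coupling on the intermediate region, and a Strassen coupling on the grand-bundle region certified via Theorem~\ref{regionthm}). The only differences are presentational—you verify the per-slice mass identity and the outward-integral condition of Theorem~\ref{regionthm} by direct computation where the paper invokes the absorption-hyperplane Claim~\ref{hyperplane}, and you make explicit the convexity check that the paper leaves implicit.
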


\subsection{The Absorption Hyperplane}

A useful feature of independent exponential distributions is that our measures $\mu^{\mathcal{X}_0}$ and $\nu^{\mathcal{Y}}$ give rise to a set $\mathcal{H} \subset \mathbb{R}^n_{\geq 0}$, for which integrating the difference of the densities of $\mu^{\mathcal{X}_0}$ and $\nu^{\mathcal{Y}}$ outwards along any line starting from $\mathcal{H}$ yields 0. This set $\mathcal{H}$ provides useful geometric intuition behind the structure of the optimal mechanism.

\begin{claim}\label{hyperplane}
Suppose $z \in \mathbb{R}^n_{\geq 0}$ satisfies $\sum \lambda_j z_j = n$. Then, for any vector $\vec{v} \in \mathbb{R}^n_{\geq 0}$: 
$$\int_{0}^\infty \left( n+1 - \sum_{i}\lambda_i (z_i+\tau {v}_i) \right)e^{- \sum_{i}\lambda_i (z_i + \tau {v}_i)}d\tau = 0.$$
\end{claim}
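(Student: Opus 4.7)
The plan is to reduce the claim to an elementary one-dimensional integral calculation, using the key assumption $\sum_j \lambda_j z_j = n$.

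First I would introduce the shorthands $s := \sum_j \lambda_j z_j$ and $w := \sum_j \lambda_j v_j$, so that for every $\tau \geq 0$ one has
\[
\sum_i \lambda_i (z_i + \tau v_i) = s + \tau w = n + \tau w,
\]
using $s = n$ by hypothesis. Substituting this into the integrand transforms the expression inside the integral into
\[
\bigl(n+1 - (n + \tau w)\bigr)\, e^{-(n+\tau w)} \;=\; e^{-n}\,(1 - \tau w)\, e^{-\tau w}.
\]
The factor $e^{-n}$ is a constant that pulls out of the integral, reducing the claim to showing
\[
\int_0^\infty (1 - \tau w)\, e^{-\tau w}\, d\tau \;=\; 0.
\]

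Next I would evaluate this single-variable integral. Assuming $w > 0$ (the generic case, since $\vec v \in \mathbb{R}_{\geq 0}^n$ represents a direction and may be taken nonzero, and all $\lambda_j > 0$), the substitution $u = \tau w$, $du = w\, d\tau$ gives
\[
\int_0^\infty (1 - \tau w)\, e^{-\tau w}\, d\tau \;=\; \frac{1}{w}\int_0^\infty (1-u)\,e^{-u}\, du \;=\; \frac{1}{w}\bigl(\Gamma(1) - \Gamma(2)\bigr) \;=\; \frac{1}{w}(1 - 1) \;=\; 0,
\]
which is the desired equality. Equivalently one can integrate by parts: $\int_0^\infty e^{-\tau w} d\tau = 1/w$ and $\int_0^\infty \tau w \, e^{-\tau w} d\tau = 1/w$, and the two terms cancel.

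There is essentially no obstacle here beyond bookkeeping; the content of the claim is simply that the mean of the exponential distribution with rate $w$ equals $1/w$, so that $\int_0^\infty (1-\tau w) e^{-\tau w} d\tau$ vanishes. The only mild case to mention is $\vec v = \vec 0$ (or $w = 0$), which is excluded since we are integrating outward along a line direction; the claim is vacuous or trivially adjusted there. The conceptual point — which motivates calling $\mathcal H := \{z : \sum_j \lambda_j z_j = n\}$ the ``absorption hyperplane'' — is that the hypothesis $s = n$ exactly sets the constant term in $(n+1 - s - \tau w)$ to $1$, which is precisely the value making $(1-\tau w)e^{-\tau w}$ integrate to zero on $[0,\infty)$.
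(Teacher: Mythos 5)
Your proof is correct and is essentially the same as the paper's: both are direct elementary evaluations of the integral (the paper exhibits the antiderivative $e^{-\sum\lambda_i(z_i+\tau v_i)}(\sum\lambda_i(z_i+\tau v_i)-n)/\sum\lambda_i v_i$ and evaluates at the endpoints, which is the same computation as your reduction to $\int_0^\infty(1-\tau w)e^{-\tau w}\,d\tau=0$). You are in fact slightly more careful than the paper in flagging the degenerate case $\vec v=\vec 0$ (i.e.\ $w=0$), where the claimed identity fails and which the paper's division by $\sum_i\lambda_i v_i$ silently excludes.
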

\begin{proof}
We have
\begin{align*}
\int_{0}^\infty \left( n+1 - \sum_{i}\lambda_i (z_i+\tau v_i) \right)e^{- \sum_{i}\lambda_i (z_i + \tau v_i)}d\tau &= \frac{e^{-\sum_{i}\lambda_i (z_i + \tau v_i)}( \sum_{ i}\lambda_i( z_i+\tau v_i) - n)}{\sum_i \lambda_i v_i}\biggr|_{\tau=0}^\infty\\
&=-\frac{e^{-\sum_i \lambda_i z_i}(\sum_i \lambda_i z_i - n)}{\sum \lambda_i v_i} = 0.
\end{align*}\end{proof}

We refer to the set $\mathcal{H} = \{z \in \mathbb{R}^n_{\geq 0} : \sum z_i \lambda_i = n\}$ as the \emph{absorption hyperplane}, since integrating $(n+1-\sum \lambda_i x_i)e^{-\sum \lambda_i x_i}$ starting from any point in the set and going outwards in any positive direction gives 0.

\subsection{Proof of Optimality}
In this section, we prove Theorem~\ref{twoitems}, which fully specifies the optimal mechanism for two independent exponentially-distributed items.

By Theorem~\ref{thm::main}, we must prove that there exists $\gamma^* \in \Gamma(\mu^{\mathcal{X}_0},\nu^{\mathcal{Y}})$ such that $u_{p^*}(x) - u_{p^*}(y) = c(x,y)$, $\gamma^*$-almost surely.

Our transport map $\gamma^*$ will be decomposed into $\gamma_1 + \gamma_2 + \gamma_3$, where (see Figure~\ref{fig: exp items first} for reference)
\begin{enumerate}
\item  $\gamma_1 \in \Gamma( \mu^{\mathcal{X}_0}|_{\{\vec{0}\}}, \nu^{\mathcal{Y}}|_{ Z_{p^*}}) $
\item  $\gamma_2 \in  \Gamma(\mu^{\mathcal{X}_0}|_{\mathcal{B} \cap \mathcal{X}_0 }, \nu^{\mathcal{Y}}|_{ \mathcal{B} \cap \mathcal{Y}} )$, where
$\mathcal{B} = \left\{z : \lambda_1z+\lambda_2 z > 2 \textrm{ and } z_2\left(\frac{\lambda_1 - \lambda_2}{\lambda_1} \right) \leq p^* - \frac{2}{\lambda_1}  \right\}$
\item  $\gamma_3 \in \Gamma( \mu^{\mathcal{X}_0}|_{ \mathcal{W} \cap \mathcal{X}_0 }, \nu^{\mathcal{Y}}|_{ \mathcal{W} \cap \mathcal{Y}})$, where
$\mathcal{W} = \left\{z : z_1 + z_2 > p^* \textrm{ and } z_2\left(\frac{\lambda_1 - \lambda_2}{\lambda_1} \right) > p^* - \frac{2}{\lambda_1}  \right\}$
\end{enumerate}
such that $u_{p^*}(x) - u_{p^*}(y) = c(x, y)$ $\gamma_i$-almost surely for all $i$.
We note that for each $\gamma_i$, the marginal distributions of $\mu$ and $\nu$ that we are seeking to couple have the same total mass.

\begin{figure}[h!]
\parbox{.44\textwidth} {
  \begin{tikzpicture}
    \begin{axis}[width=2.7in, height=2.7in, ymin=0, ymax=4, xmin=0, xmax=4,  ylabel=$z_2$, xtick pos=left, xtick={.66666,1}, xticklabels={$\frac{2}{\lambda_1}$, $\frac{3}{\lambda_1}$}, ytick pos=left, ytick={2,3, 1.4}, yticklabels={$\frac{2}{\lambda_2}$,$\frac{3}{\lambda_2}$, $p^*$}]
      \addplot+[color=black,  fill=gray!20, mark=none, domain=0:1.5, samples=2]
      {3 - 3*x}
      \closedcycle;
       \addplot+[color=gray, fill=gray, domain=.3:1.5, mark=none]
       {2-3*x}
       \closedcycle;
        \addplot+[color=gray, fill=gray, domain=0:.3,mark=none]
       {1.4-x}
       \closedcycle;
      \addplot+[color=black, domain=0:1,samples=2,  mark=none, dotted]
       {2 - 3*x} 
       \closedcycle;
      \addplot[color=black, mark=none] coordinates{
      (.3,1.1)
      (6,1.1)}; 
      \node at (axis cs:.2,.5) {$Z'_{p^*}$};
      \node (lab1) at (axis cs:1.2,.9){$\mathcal{B} \cap \mathcal{Y}$};
      \node (dest) at (axis cs:.6,.4){};
      \draw[->] (axis cs: 1.2,.75)--(dest);
      \node at (axis cs: 2.3,.5){$\mathcal{B} \cap \mathcal{X}$};
      \node at (axis cs:2.5,2.8){$\mathcal{W} \cap \mathcal{X}$};
      \node at (axis cs:1,2.2){$\mathcal{W} \cap \mathcal{Y}$};
      \node (lab2) at (axis cs:1.08,2.1){};
      \node (dest2) at (axis cs:0,1.35){};
      \draw[->] (lab2)--(dest2);
      \node (dest3) at (axis cs:.3, 1.3){};
      \node (lab3) at (axis cs:1.08,2.13){};
      \draw[->] (lab3)--(dest3);
      \node (dest4) at (axis cs: .017, 1.68){};
      \node (lab4) at (axis cs: 1,3.2){};
      \draw [->] (lab4)--(dest4);
      \node at (axis cs: 1.074,3.47){\small{absorption}};
      \node at (axis cs: 1.1,3.25){\small{hyperplane}};
    \end{axis}
  \end{tikzpicture} 
  \caption{The decomposition of $\mathbb{R}^n_{\geq 0}$ for the proof of Theorem~\ref{twoitems}. In this diagram, $p^* > 2/\lambda_1$. If $p^* \leq 2/\lambda_1$, $\mathcal{B}$ is empty.}
  \label{fig: exp items first}
}
\qquad
\begin{minipage}{.44\textwidth}
\vspace{-1cm}
\begin{tikzpicture}
\begin{axis}[width=2.6in,height=2.6in,ymin=1.1, ymax=3.1, xmin=0, xmax=2, xlabel=$z_1$,  ytick pos=left, ytick={2,3, 1.4,1.1}, yticklabels={$\frac{2}{\lambda_2}$,$\frac{3}{\lambda_2}$, $p^*$,$ \frac{\lambda_1 p^* - 2}{\lambda_1 - \lambda_2}$}, xtick={5},xticklabels={1}]
\addplot+[color=black,  fill=gray!20, mark=none, domain=0:1.5, samples=2]
{3 - 3*x}
\closedcycle;
 \addplot+[color=gray, fill=gray, domain=.3:1.5, mark=none]
 {2-3*x}
 \closedcycle;
  \addplot+[color=gray, fill=gray, domain=0:.3,mark=none]
 {1.4-x}
 \closedcycle;
\addplot+[color=black, domain=0:1,samples=2,  mark=none, dotted]
 {2 - 3*x} 
 \closedcycle;
\addplot[color=black, mark=none] coordinates{
(.3,1.1)
(6,1.1)}; 
\node at (axis cs:.25,1.5){$\nu^{\mathcal{Y}}|_{\mathcal{W} \cap \mathcal{Y}}$};
\node at (axis cs:1,2.2){$\mu^{\mathcal{X}}|_{\mathcal{W} \cap \mathcal{X}}$};
\end{axis}
\end{tikzpicture}
\caption{To prove the existence of $\gamma_3$, we must show that $ \mu^{\mathcal{X}}|_{ \mathcal{W} \cap \mathcal{X}} \succeq \nu^{\mathcal{Y}}|_{ \mathcal{W} \cap \mathcal{Y}}$.}
\label{case3fig}
\end{minipage}
\end{figure}

We proceed to prove the existence of each $\gamma_i$ separately.
\begin{enumerate}
\item We have $\mu^{\mathcal{X}_0}(\{\vec{0}\}) = \nu^{\mathcal{Y}}(Z'_{p^*})$, by definition of $p^*$. Furthermore, $c(\vec{0},y) = 0$ for all $y$. Therefore, since $u_{p^*}(y) = 0$ for all $y \in Z_{p^*}$, the equality $u(\vec{0}) - u(y) = c(x,y)$ is trivially satisfied for all $y \in Z'_{p^*}$.  We can therefore  take $\gamma_1$ to assign probability mass to $(\{\vec{0}\} \times Z)$ equal to $ \nu^{\mathcal{Y}}(Z)$ for each subset $Z \subseteq Z'_{p^*}$.
\item We note that $\mathcal{B}$ consists of all points to the right of the absorption hyperplane with $z_2$ coordinate less than a particular threshold. Therefore, for any $z_2^*$, we have by Claim~\ref{hyperplane}:
$$\int_\frac{2 - \lambda_2 z_2^*}{\lambda_1}^\frac{3 - \lambda_2 z_z^*}{\lambda_1} \left(3- \lambda_1 z_1 - \lambda_2 z_2^*  \right)e^{-\lambda_1 z_1 - \lambda_2 z_2^*}dz_1 =\int_\frac{3 - \lambda_2 z_2^*}{\lambda_1}^\infty \left( \lambda_1 z_1 + \lambda_2 z_2^* - 3 \right)e^{-\lambda_1 z_1 - \lambda_2 z_2^*}dz_1.$$
From this we deduce that we can choose the measure $\gamma_2 \in  \Gamma(\mu^{\mathcal{X}_0}|_{\mathcal{B} \cap \mathcal{X}_0 }, \nu^{\mathcal{Y}}|_{ \mathcal{B} \cap \mathcal{Y}} )$ so that positive density is only placed on {pairs of} points $(x,y)$ with $x_2=y_2$, i.e. with their second coordinates equal. Indeed, we notice that, for $x \in \mathcal{B} \cap \mathcal{X}_0$ and $y \in \mathcal{B} \cap \mathcal{Y}$,
$$u_{p^*}(x) - u_{p^*}(y) = x_1 + x_2 \frac{\lambda_2}{\lambda_1} - y_1 - y_2 \frac{\lambda_2}{\lambda_1}.$$
Therefore, if $x_2=y_2$ (which we take to hold $\gamma_2$-almost surely), then $u_{p^*}(x)-u_{p^*}(y) = x_1 - y_1 = c(x,y)$, as desired.
\item\label{expbundcase} In region $\mathcal{W}$, our mechanism sells the grand bundle for price $p^*$. To prove the existence of measure $\gamma_3$, it suffices to prove that $ \mu^{\mathcal{X}_0}|_{ \mathcal{W} \cap \mathcal{X}_0 } \succeq \nu^{\mathcal{Y}}|_{ \mathcal{W} \cap \mathcal{Y}}$, as illustrated in Figure~\ref{case3fig}. Indeed, then Strassen's theorem (Theorem~\ref{thm:strassen}) implies that $\gamma_3 \in \Gamma( \mu^{\mathcal{X}_0}|_{ \mathcal{W} \cap \mathcal{X}_0 }, \nu^{\mathcal{Y}}|_{ \mathcal{W} \cap \mathcal{Y}})$ exists so that pairs of points $(x,y)$ sampled from $\gamma_3$ satisfy $y \preceq x$ almost surely, which in turn implies that $u_{p^*}(x)-u_{p^*}(y) = (x_1+x_2 - p^*)-(y_1+y_2 - p^*)=c(x,y)$ almost surely.
%
%
%

The desired stochastic dominance follows from Theorem~\ref{regionthm}, taking $g$ and $h$ to be the density functions of $\mu^{\mathcal{X}_0}|_{\mathcal{W} \cap \mathcal{X}}$ and $\nu^{\mathcal{Y}}|_{\mathcal{W} \cap \mathcal{Y}}$, respectively, and noticing that, restricted within ${\cal W}$, $Z'_{p^*}$ lies below the absorption hyperplane.\footnote{It is straightforward to verify that all of the conditions of Theorem~\ref{regionthm} hold. In particular, since $Z_{p^*}'$ lies under the absorption hyperplane, the second criterion for Theorem~\ref{regionthm} is trivially satisfied.} 


\end{enumerate}

This concludes the proof of Theorem~\ref{twoitems}. We notice that, if $p^* \leq 2/\lambda_{\max}$, then the region $\mathcal{B}$ is empty, and $Z'_{p^*}$ is simply the region below the $45^\circ$ line given by $z_1 + z_2 = p^*$.

\begin{corollary}
If $p^* \leq 2/\lambda_{\max}$, then the optimal mechanism is a take-it-or-leave-it offer of the grand bundle for price $p^*$.
\end{corollary}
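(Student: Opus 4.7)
The plan is to show that under the hypothesis $p^* \leq 2/\lambda_{\max} = 2/\lambda_1$, the three-case utility function $u_{p^*}$ of Theorem~\ref{twoitems} collapses (up to measure-zero sets) to the two-case utility function of a grand-bundle mechanism, namely $u(z) = \max\{0, z_1 + z_2 - p^*\}$. Once this is established, Theorem~\ref{twoitems} itself certifies that this mechanism is optimal.

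First I would handle the ``zero utility'' region. In Theorem~\ref{twoitems}, $u_{p^*}(z)=0$ precisely on $\{z_1+z_2\leq p^*\}\cap\{\lambda_1 z_1+\lambda_2 z_2\leq 2\}$. On the first set, the bound $\lambda_1 z_1+\lambda_2 z_2 \leq \lambda_{\max}(z_1+z_2)\leq \lambda_{\max} p^* \leq 2$ shows that the second constraint is automatically satisfied, so the zero region is exactly $Z_{p^*}=\{z_1+z_2\leq p^*\}$, matching the zero region of the grand-bundle mechanism.

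Next I would show the middle region $\mathcal{B}$ appearing in Theorem~\ref{twoitems} is (essentially) empty. Its defining inequality $z_2\frac{\lambda_1-\lambda_2}{\lambda_1}\leq p^*-\frac{2}{\lambda_1}$ has nonnegative left-hand side (since $\lambda_1\geq\lambda_2$) and nonpositive right-hand side (by the hypothesis). Thus the only way the inequality can be satisfied is by equality. If $p^*<2/\lambda_1$ strictly, this is impossible and $\mathcal{B}=\emptyset$. If $p^*=2/\lambda_1$ and $\lambda_1>\lambda_2$, the inequality forces $z_2=0$, carving out a measure-zero slice which can be absorbed into either adjacent region without changing the mechanism. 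If $\lambda_1=\lambda_2=\lambda$, then $p^*=2/\lambda$ and the middle-case formula $z_1+z_2\tfrac{\lambda_2}{\lambda_1}-\tfrac{2}{\lambda_1}$ coincides identically with the bundling formula $z_1+z_2-p^*$, so there is nothing to resolve.

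Having eliminated $\mathcal{B}$, the remaining formula $u_{p^*}(z)=z_1+z_2-p^*$ holds on all of $D\setminus Z_{p^*}$, so $u_{p^*}(z)=\max\{0, z_1+z_2-p^*\}$ almost everywhere. This is exactly the utility function induced by the take-it-or-leave-it offer of the grand bundle at price $p^*$, and Theorem~\ref{twoitems} asserts optimality of $u_{p^*}$, completing the argument. No obstacle is anticipated: the proof is a direct case analysis of the conditions in Theorem~\ref{twoitems} under the added hypothesis, with the only mild subtlety being the knife-edge case $p^*=2/\lambda_{\max}$ with $\lambda_1=\lambda_2$.
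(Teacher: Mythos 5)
Your proposal is correct and follows essentially the same route as the paper, which justifies the corollary with the one-line observation that under the hypothesis the region $\mathcal{B}$ is empty and $Z'_{p^*}$ reduces to the region below the $45^\circ$ line $z_1+z_2=p^*$. Your more careful treatment of the knife-edge case $p^*=2/\lambda_{\max}$ is a welcome addition but does not change the argument.
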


\section{General Characterization of Two-Item Optimal Mechanisms}

We now generalize the approach of Section~\ref{exponentialsolution} to further understand the structure of optimal mechanisms. The following definition is summarized by Figure~\ref{canonicalfig}.

\begin{definition}\label{def:canonical1}
A \emph{canonical zero set} for the two-item optimal mechanism design problem is a nonempty  closed subset $Z$ of $D_1 \times D_2$, where $Z$ is decreasing and convex. We denote by $s : [d_1^-, c] \rightarrow D_2$ (with $c < d_1^+$) the outer boundary of $Z$.\footnote{While $s$ need not be a function, it is notationally convenient to refer to it as such. In Theorem~\ref{generalthm}, we only refer to ``$s(z_1)$'' when the slope of $s$ is between horizontal and $45^\circ$ downwards, and only refer to ``$s^{-1}(z_2)$'' when the slope is between $45^\circ$ downwards and vertical. To be strictly formal, we could define $s$ to be the set of points on the boundary.} That is: 
	$$Z = \left\{(z_1,z_2) : z_1 \in [d_1^-,c] \textrm{ and } z_2 \leq s(z_1) \right\}.$$
We require that $s$ be differentiable almost everywhere.

We denote by $a, b \in [d_1^-,c]$ points such that:
\begin{itemize}
	\item $0 \geq s'(z_1) \geq -1$ for $z_1 \in [d_1^-,a]$
	\item $s'(z_1) = -1$ for $z_1 \in [a,b]$
	\item $s'(z_1) \leq -1$ for $z_1 \in [b,c]$.
\end{itemize}

A canonical zero set $Z$ gives rise to a \emph{canonical partition} of $D_1 \times D_2$ into four regions, $Z$, $\mathcal{A}$, $\mathcal{B}$, $\mathcal{W}$, where:
\begin{itemize}
	\item $\mathcal{A} = ([d_1^-,a] \times D_2) \setminus Z$
	\item $\mathcal{B} = ([b,d_1^+) \times [d_2^-,s(b)]) \setminus Z$
	\item $\mathcal{W} = D_1 \times D_2 \setminus (Z \cup \mathcal{A} \cup \mathcal{B}) =  ((a,d_1^+) \times (s(b),d_2^+)) \setminus Z$,
\end{itemize}
as shown in Figure~\ref{canonicalfig}.
\end{definition}

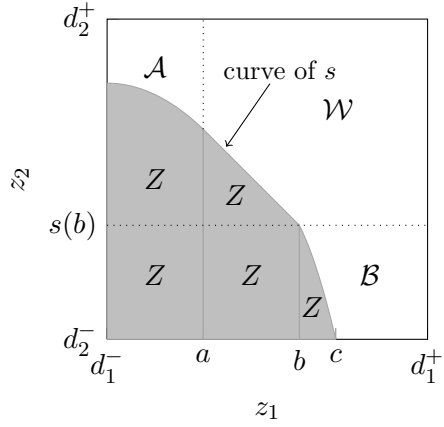
\begin{figure}\begin{center}
\begin{tikzpicture}
\begin{axis}[width=2.3in, height=2.3in,ymin=0, ymax=10, xmin=0, xmax=10, xlabel=$z_1$, ylabel=$z_2$,  ytick pos=left, ytick={3.56,0,10}, yticklabels={$s(b)$,$d_2^-$,$d_2^+$}, xtick={3, 6,7.135,0,10},xticklabels={$a$,$b$,$c$,$d_1^-$,$d_1^+$}, xtick pos=left]
  \addplot+[color=gray!70, fill=gray!50, domain=0:3,mark=none]
 {8-.16*x*x}
 \closedcycle;
   \addplot+[color=gray!70, fill=gray!50, domain=3:6,mark=none]
 {9.56-x}
 \closedcycle;
\addplot+[color=gray!70,fill=gray!50,domain=6:7.135,mark=none]
{4.56-(x-5)*(x-5)}
\closedcycle;
 \addplot+[color=black, domain=0:10,samples=2,  mark=none, dotted]
 {3.56} 
 ;
 \addplot[color=black, , dotted, mark=none] coordinates{
(3,6.56)
(3,11)
};
\node at (axis cs:1.5,5){$Z$};
\node at (axis cs:4.5,2){$Z$};
\node at (axis cs:6.4,1){$Z$};
\node at (axis cs:1.5,2){$Z$};
\node at (axis cs:4,4.5){$Z$};
\node at (axis cs:8.2,2){$\mathcal{B}$};    
\node at (axis cs:7.2,7.2){$\mathcal{W}$};
\node at (axis cs:1.5,8.5){$\mathcal{A}$};
\node (dest4) at (axis cs: 3.5, 5.7){};
\node (lab4) at (axis cs: 5.3,8.3){};
\draw [->] (lab4)--(dest4);
\node at (axis cs: 5.4,8.4){\small{curve of $s$}};
\end{axis}
\end{tikzpicture}
\end{center}
\vspace{-.5cm}
\caption{A canonical partition of $D_1 \times D_2$}\label{canonicalfig}
\end{figure}

If a canonical partition is well-formed according to the following definition, then Theorem~\ref{generalthm} characterizes the structure of the optimal mechanism.
\begin{definition}\label{def:canonical2}
A canonical partition $s, Z, \mathcal{A}, \mathcal{B}, \mathcal{W}$ is \emph{well-formed} if the following conditions are satisfied:
\begin{itemize}
	\item For all $z \in Z \setminus \{d_-\}$, it holds that $(n+1)f(z) + \nabla f(z) \cdot z \geq 0;$ i.e. $Z\setminus \{d_-\}$ lies in ${\cal Y}$.
	\item The following equality holds: $$\int_Z ((n+1)f(z) + \nabla f(z) \cdot z)dz = \int_{D_1 \times D_2} ( (n+1)f(z) + \nabla f(z) \cdot z)dz;$$
	i.e. the mass assigned by $\nu^{\cal Y}$ to $Z$ equals $\mu^{{\cal X}_0}(\{d_{-}\})$.
	\item For all $z_1 \in [d_1^-,a]$, it holds that:
	$\int_{s(z_1)}^{d_2^+} \left( (n+1)f(z) + \nabla f(z) \cdot z   \right)   dz_2 = 0;$
i.e. $\nu^{\cal Y}$ and $\mu^{\cal X}$ place the exact same mass on every vertical line originating from a point $(z_1,s(z_1))$, $z_1 \in [d_1^-,a]$ and going upwards.
	\item For all $z_2 \in [d_2^-,s(b)]$, it holds that:
	$\int_{s^{-1}(z_2)}^{d_1^+} \left((n+1)f(z) + \nabla f(z) \cdot z  \right)   dz_1 = 0;$ i.e. $\nu^{\cal Y}$ and $\mu^{\cal X}$ place the exact same mass on every horizontal line that starts from a point $(s^{-1}(z_2),z_2), z_2 \in [d_2^-,s(b)]$ and going rightwards.

	\item For all increasing subsets $T \subseteq \mathcal{W}$, it holds that:
	$\int_T ((n+1)f(z) + \nabla f(z) \cdot z)dz \leq 0;$ i.e. $\mu^{\cal X}|_{\cal W}$ stochastically dominates $\nu^{\cal Y}|_{\cal W}$.
\end{itemize}
\end{definition}

\begin{theorem}\label{generalthm}
Let $s, Z, \mathcal{A}, \mathcal{B}, \mathcal{W}$ be a well-formed canonical partition of $D_1 \times D_2$. Then
the optimal mechanism behaves as follows for a bidder of declared type $(z_1,z_2)$:
\begin{itemize}
	\item If $(z_1,z_2) \in Z$, the bidder receives no goods and is charged 0.
	\item If $(z_1,z_2) \in \mathcal{A}$, the bidder receives item 1 with probability $-s'(z_1)$, item 2 with probability 1, and is charged $s(z_1) - z_1s'(z_1)$.
	\item If $(z_1, z_2) \in \mathcal{B}$, the bidder receives item 1 with probability 1, item 2 with probability $-1/s'(s^{-1}(z_2))$, and is charged $s^{-1}(z_2) - z_2/s'(s^{-1}(z_2))$.
	\item If $(z_1,z_2) \in \mathcal{W}$, the bidder receives both goods with probability 1 and is charged $a + s(a)$ (where $a$ is as specified in Definition~\ref{def:canonical1}).
\end{itemize}
\end{theorem}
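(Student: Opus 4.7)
The plan is to invoke Theorem~\ref{thm::main} by producing a utility function $u^*$ feasible for the relaxed primal, a transport map $\gamma^*\in\Gamma(\mu^{\mathcal{X}_0},\nu^{\mathcal{Y}})$ feasible for the dual, and verifying the complementary slackness identity $u^*(x)-u^*(y)=c(x,y)$ $\gamma^*$-almost surely. The natural candidate $u^*$ is the utility function induced by the claimed mechanism, namely
$$u^*(z_1,z_2)=\begin{cases} 0 & (z_1,z_2)\in Z,\\ z_2 - s(z_1) & (z_1,z_2)\in\mathcal{A},\\ z_1 - s^{-1}(z_2) & (z_1,z_2)\in\mathcal{B},\\ z_1+z_2-(a+s(a)) & (z_1,z_2)\in\mathcal{W}.\end{cases}$$
Before invoking duality I would first verify that $u^*$ satisfies the stronger \emph{original}-problem constraints. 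Continuity across the $\mathcal{A}$--$\mathcal{W}$ and $\mathcal{B}$--$\mathcal{W}$ interfaces follows from the identity $a+s(a)=b+s(b)$ (a consequence of $s'\equiv-1$ on $[a,b]$); nonnegativity is immediate on each region; the gradient constraint $\nabla u^*\in[0,1]^2$ a.e.\ reduces to the slope bounds on $s$ in Definition~\ref{def:canonical1}; and convexity follows by writing $u^*$ as the pointwise maximum of the affine utility functions of the mechanism's menu items (a continuum parametrized by the curve~$s$, plus the ``take nothing'' option), with concavity of $s$ (implied by $Z$ being closed, convex, and decreasing) ensuring that the per-type maximum agrees with the piecewise formula above.

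I would then construct $\gamma^*$ as a sum $\gamma_Z+\gamma_{\mathcal{A}}+\gamma_{\mathcal{B}}+\gamma_{\mathcal{W}}$, one summand per region of the canonical partition, following the strategy used in the proof of Theorem~\ref{twoitems}. For $\gamma_Z\in\Gamma(\mu^{\mathcal{X}_0}|_{\{d_-\}},\nu^{\mathcal{Y}}|_{Z})$, the first two bullets of Definition~\ref{def:canonical2} together with Claim~\ref{equalsone} give mass balance, and the slackness identity $u^*(d_-)-u^*(y)=0=c(d_-,y)$ holds trivially since $d_-\preceq y$. For $\gamma_{\mathcal{A}}$, the third bullet provides matching total mass of $\mu^{\mathcal{X}}|_{\mathcal{A}}$ and $\nu^{\mathcal{Y}}|_{\mathcal{A}}$ on every vertical line above $s$, allowing a coupling supported on pairs $(x,y)$ with $x_1=y_1$; choosing a monotone per-line coupling additionally yields $y_2\le x_2$, which gives $u^*(x)-u^*(y)=(x_2-s(x_1))-(y_2-s(y_1))=x_2-y_2=c(x,y)$. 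The construction of $\gamma_{\mathcal{B}}$ is the horizontal-line analogue, using the fourth bullet. Finally, $\gamma_{\mathcal{W}}$ is produced by applying Strassen's theorem (Theorem~\ref{thm:strassen}) to the stochastic dominance in the fifth bullet, yielding a coupling of $\mu^{\mathcal{X}}|_{\mathcal{W}}$ with $\nu^{\mathcal{Y}}|_{\mathcal{W}}$ supported on pairs with $y\preceq x$, so that $u^*(x)-u^*(y)=(x_1+x_2)-(y_1+y_2)=c(x,y)$.

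Assembling these, $\gamma^*$ is dual-feasible and satisfies the slackness identity $\gamma^*$-a.s., so Theorem~\ref{thm::main} certifies $u^*$ as optimal for the relaxed problem; since $u^*$ also satisfies every constraint of the original mechanism design formulation, the relaxation is tight and $u^*$ is the revenue-optimal utility function, giving exactly the menu in the theorem statement.

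The main obstacle is the per-line monotone couplings needed for $\gamma_{\mathcal{A}}$ and $\gamma_{\mathcal{B}}$. The well-formedness bullets provide only the \emph{total} signed mass of $(n+1)f+\nabla f\cdot z$ on each vertical (resp.\ horizontal) line, while obtaining a coupling that respects $\preceq$ requires a one-dimensional stochastic dominance of $\mu^{\mathcal{X}}$ over $\nu^{\mathcal{Y}}$ along that line. Concretely, this reduces to showing that along each such line the signed density changes sign exactly once (positive on the $\mathcal{Y}$ portion adjacent to $s$, negative on the $\mathcal{X}$ portion further out), so that the partial integral $\int_t^{d_2^+}((n+1)f+\nabla f\cdot z)\,dz_2$ starts and ends at $0$ and remains nonpositive for every $t\ge s(z_1)$. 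I expect this to be the delicate analytic step, provable in concrete canonical instances via an adaptation of Theorem~\ref{regionthm} (analogous to the way that theorem is used for the $\mathcal{W}$ region in the proof of Theorem~\ref{twoitems}).
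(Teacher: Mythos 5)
Your proposal is correct and follows the paper's overall strategy (complementary slackness via Theorem~\ref{thm::main}, a region-by-region transport map, Strassen's theorem for $\mathcal{W}$), but it differs in two substantive ways. First, for incentive compatibility the paper does not argue convexity abstractly: it verifies truthfulness by an explicit case analysis over all pairs of regions containing the true and declared types (Appendix~\ref{generalappendix}), using concavity of $s$ and the auxiliary extension $s^*$ of $s$ by a $45^\circ$ line. Your route --- writing $u^*$ as the pointwise supremum of the affine utilities of the menu items (the tangent lines to the concave curve $s$, reparametrized on $[b,c]$ so allocations stay in $[0,1]^2$, plus the zero option) --- is cleaner and buys convexity, continuity, and the gradient bounds in one stroke; the paper's casework is more elementary but longer. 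Second, your flagged ``obstacle'' is a genuine subtlety that the paper elides: the paper asserts the existence of the couplings on $\mathcal{A}$ and $\mathcal{B}$ by analogy with the exponential case, where the single-crossing structure of the signed density along each line makes the $\mathcal{Y}$-mass lie entirely below (resp.\ to the left of) the $\mathcal{X}$-mass, so the monotone per-line coupling is immediate. You are right that the third and fourth bullets of Definition~\ref{def:canonical2} only pin down the \emph{total} signed mass on each line, whereas complementary slackness needs $y\preceq x$ along the line and hence one-dimensional stochastic dominance (nonpositivity of all partial integrals $\int_t^{d_2^+}$, not just the one starting at $s(z_1)$); in the paper's applications this follows because the signed density factors as $f(z)$ times an increasing function minus a constant, giving single crossing. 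So your proposal is, if anything, more careful than the paper on this point, and your proposed resolution (single crossing along lines, in the spirit of Theorem~\ref{regionthm}) is exactly what makes the paper's examples go through.
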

Note that Theorem~\ref{generalthm} is symmetric with respect to relabeling $z_1$ and $z_2$ and replacing $s, a, b$, and $c$ with $s^{-1}$, $s^{-1}(b)$, $s^{-1}(a)$, and $s^{-1}(0)$, respectively. Furthermore, we observe that Theorem~\ref{exponentialsolution} is a special case of this theorem, with $Z=Z'_{p^*}$ and $\mathcal{A}$ being empty.

\begin{proof}
%
The utility function $u$ induced by the mechanism is as follows:
\begin{itemize}
	\item If $(z_1, z_2) \in Z$, the utility is 0.
	\item If $(z_1, z_2) \in \mathcal{A}$, the utility is $z_2 - s(z_1)$.
	\item If $(z_1, z_2) \in \mathcal{B}$, the utility is $z_1 - s^{-1}(z_2)$.
	\item If $(z_1, z_2) \in \mathcal{W}$, the utility is $z_1 + z_2 - (a+s(a))$.
\end{itemize}

It is straightforward to show that there exists a transport map $\gamma$ dual to $u$, and therefore $u$ is optimal for the relaxed problem. Indeed, $\gamma$ maps between $\{ d_- \}$ and $Z$, between $\mathcal{A} \cap \mathcal{X}$ and $\mathcal{A} \cap \mathcal{Y}$, between $\mathcal{B} \cap \mathcal{X}$ and $\mathcal{B} \cap \mathcal{Y}$, and between $\mathcal{W} \cap \mathcal{X}$ and $\mathcal{W} \cap \mathcal{Y}$. The full argument for the existence of $\gamma$ is nearly identical to the argument given in the proof of Theorem~\ref{exponentialsolution}. In particular, the existence of an appropriate map between $\mathcal{W} \cap \mathcal{X}$ and $\mathcal{W} \cap \mathcal{Y}$ follows from Strassen's theorem.

Since a solution to the relaxed problem is not necessarily a solution to the original mechanism design instance, it remains to show that the mechanism is truthful (or, equivalently, that $u$ is convex). We consider a bidder of type $(z_1,z_2)$, and let $(z_1^*, z_2^*)$ be any other type. It is straightforward to prove, through a small amount of casework, that that the bidder's utility never increases by declaring $(z_1^*,z_2^*)$ instead of $(z_1,z_2)$. The full proof of this fact is in Appendix \ref{generalappendix}.\end{proof}

\section{Numerical Example: Optimal Mechanism for Two Beta Distributions}\label{sec:beta}

We  obtain a closed-form description of the optimal mechanism for two items distributed according to the Beta distributions shown below. Our approach here illustrates a general recipe for employing our characterization theorem (Theorem~\ref{generalthm}) to find closed-form descriptions of optimal mechanisms, comprising the following steps: (i) definition of the sets $S_{\rm top}$ and $S_{\rm right}$, (ii) computation of a critical price $p^*$; (iii) definition of a canonical partition in terms of (i) and (ii); and (iv) application of Theorem~\ref{generalthm}.

Suppose that the probability density functions of our items are:
$$
f_1(z_1) = \frac{1}{B(3,3)}z_1^2(1-z_1)^2; \qquad
f_2(z_2) = \frac{1}{B(3,4)}z_2^2(1-z_2)^3
$$
for all $z_i \in [0,1)$, where $B(\cdot,\cdot)$ is the ``beta function''  and is used for normalization.

We compute $-\nabla f(z) \cdot z - 3 f(z) = f_1(z_1) f_2(z_2) \left(\frac{2}{1-z_1} + \frac{3}{1-z_2}  - 12 \right).$ Thus, we define
$$
\mathcal{X} = \left\{ z\in [0,1)^2 :\frac{2}{1-z_1} + \frac{3}{1-z_2} > 12\right\}; \;
 \mathcal{Y} = \left\{ z \in [0,1)^2 \setminus \{\vec{0}\} :\frac{2}{1-z_1} + \frac{3}{1-z_2} \leq 12  \right\}
$$
 and we define the densities
$$
\mu_d^{\mathcal{X}}(z) =  f(z) \left(\frac{2}{1-z_1} + \frac{3}{1-z_2}  - 12 \right) \cdot 1_{z \in \mathcal{X}}; \;
\nu_d^{\mathcal{Y}}(z) = f(z) \left( 12 - \frac{2}{1-z_1} - \frac{3}{1-z_2} \right) \cdot 1_{z \in \mathcal{Y}}.
$$

\noindent \textbf{Step (i).} We now define the set $S_{\textrm{top}} \subset [0,1)^2$ by the rule that $(z_1,z_2) \in S_{\textrm{top}}$ if
$$ \int_{z_2}^1 \left(-\nabla f(z_1,t) \cdot (z_1,t) - (n+1)f(z_1,t)\right)dt = 0.$$
That is, starting from any point in $z\in S_{\textrm{top}}$ and integrating $\mu_d^{\mathcal{X}}(z_1,t) - \mu_d^\mathcal{Y}(z_1,t)$ ``upwards'' from $t=z_2$ to $t=1$ yields zero.
Similarly, we say that $(z_1,z_2) \in S_{\textrm{right}}$ if
$$ \int_{z_1}^1 \left(-\nabla f(t,z_2) \cdot (t,z_2) - (n+1)f(t,z_2)\right)dt = 0.$$

Notice that, since $\mathcal{X}$ is an increasing set,  $S_{\textrm{top}}$ and $S_{\textrm{right}}$ must both be subsets of $\mathcal{Y}$. We compute analytically that $(z_1, z_2) \in [0,1)^2 $ is in  $S_{\textrm{top}}$ if and only if
$$z_1 = \frac{2(-1-3z_2-6z_2^2+25z_2^3)}{3(-1-3z_2-6z_2^2+20z_2^3)}.$$
Similarly, $(z_1,z_2) \in [0,1)^2$ is in $S_{\textrm{right}}$ if and only if
$z_2 = \frac{2(-2-4z_1-6z_1^2+27z_1^3)}{-7-14z_1-21z_1^2+72z_1^3}.$

In particular, for any $z_1 \in [0,.63718)$ there exists a $z_2$ such that $(z_1,z_2) \in   S_{\textrm{right}}$, and there does not exist such a $z_2$ if $z_1 > .63718$. Furthermore, it is straightforward to verify (by computing second derivatives in the appropriate regime) that the region below $S_{top}$ and the region below $S_{right}$ are convex.

\medskip
\noindent \textbf{Step (ii).} We now compute $p^* = 0.71307$  (this choice will be explained later- it is the $z_2$-intercept of the $45^\circ$ line in Figure~\ref{betafig} which causes $\nu^\mathcal{Y}(Z) = \mu^{\mathcal{X}_0}(\{\vec{0}\})$) and define the set $L = \left\{ z \in [0,1)^2 : z_1 + z_2 = p^* \right\}$. We compute that $L \cap S_{\textrm{top}}$ contains the point $(.16016, .55291)$ and that $L \cap S_{\textrm{right}}$ contains the point $(.62307, 0.09 )$. 
We now define the curve $s: [0,.63718] \rightarrow D_2$ by
$$
s(z_1) =
\begin{cases}
z_2 \mbox{ such that } (z_1,z_2) \in S_{\textrm{top}} & \mbox{ if } 0 \leq z_1 \leq .16016\\
.71307 - z_1 & \mbox{ if } .16016 \leq z_1 \leq .62307\\
z_2 \mbox{ such that } (z_1,z_2) \in S_{\textrm{right}} & \mbox{ if } .62307 \leq z_1 \leq .63718.\\
\end{cases}
$$
It is straightforward to verify that $s$ is a concave, decreasing, continuous function.

\medskip
\noindent \textbf{Step (iii).}
We decompose $[0,1)^2$ into the following regions:

\begin{center}
\begin{tabular}{l l}
$Z =  \left\{z : z_1 \leq 0.63718 \textrm{ and } z_2 \leq s(z_1) \right\}$; & $\mathcal{A} =  ([0,0.16016] \times (0,1)) \setminus Z$ \\
$\mathcal{B} = ((0,1) \times [0,0.09] \setminus Z$; & $\mathcal{W} =  [0,1)^2 \setminus \left( Z \cup \mathcal{A} \cup \mathcal{B}  \right)$
\end{tabular}
\end{center}
%
as illustrated in Figure~\ref{betafig}.
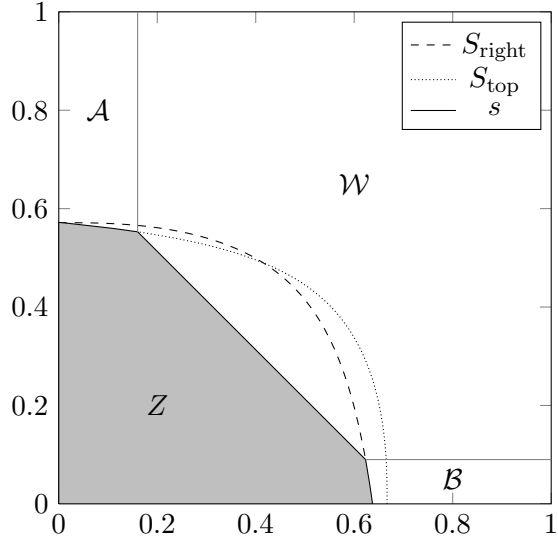
\begin{figure}
\begin{center}
\begin{tikzpicture}
\begin{axis}[height=3.2in, width=3.2in, ymin=0, ymax=1, xmin=0, xmax=1]
\addplot[color=black, mark=none, style=dashed] coordinates{
(0,0.571428571)
(0.01,0.571426771)
(0.02,0.571414458)
(0.03,0.571381893)
(0.04,0.571320144)
(0.05,0.571221024)
(0.06,0.571077036)
(0.07,0.570881303)
(0.08,0.570627497)
(0.09,0.570309773)
(0.1,0.569922699)
(0.11,0.569461185)
(0.12,0.568920423)
(0.13,0.568295819)
(0.14,0.567582934)
(0.15,0.566777426)
(0.16,0.565874992)
(0.17,0.564871313)
(0.18,0.563762001)
(0.19,0.562542547)
(0.2,0.561208267)
(0.21,0.55975425)
(0.22,0.558175304)
(0.23,0.556465902)
(0.24,0.554620119)
(0.25,0.552631579)
(0.26,0.550493379)
(0.27,0.548198027)
(0.28,0.545737355)
(0.29,0.543102442)
(0.3,0.54028351)
(0.31,0.537269819)
(0.32,0.534049548)
(0.33,0.530609652)
(0.34,0.526935707)
(0.35,0.523011725)
(0.36,0.51881995)
(0.37,0.514340609)
(0.38,0.509551628)
(0.39,0.504428307)
(0.4,0.498942918)
(0.41,0.493064249)
(0.42,0.486757053)
(0.43,0.479981386)
(0.44,0.472691815)
(0.45,0.464836465)
(0.46,0.456355844)
(0.47,0.447181422)
(0.48,0.437233856)
(0.49,0.426420801)
(0.5,0.414634146)
(0.51,0.401746533)
(0.52,0.387606888)
(0.53,0.372034664)
(0.54,0.354812313)
(0.55,0.335675318)
(0.56,0.314298833)
(0.57,0.290279471)
(0.58,0.263110078)
(0.59,0.232144114)
(0.6,0.196544276)
(0.61,0.155206584)
(0.62,0.106645083)
(0.623068,0.090001962)
};

\addplot[color=black, mark=none, style=densely dotted] coordinates{
(0.666666667,0)
(0.666663432,0.01)
(0.666641562,0.02)
(0.666584464,0.03)
(0.666477595,0.04)
(0.666308244,0.05)
(0.666065304,0.06)
(0.665739043,0.07)
(0.665320885,0.08)
(0.664803198,0.09)
(0.664179104,0.1)
(0.663442298,0.11)
(0.662586884,0.12)
(0.661607229,0.13)
(0.660497824,0.14)
(0.659253158,0.15)
(0.657867602,0.16)
(0.656335297,0.17)
(0.654650051,0.18)
(0.652805229,0.19)
(0.650793651,0.2)
(0.648607487,0.21)
(0.646238143,0.22)
(0.643676141,0.23)
(0.640910988,0.24)
(0.637931034,0.25)
(0.634723313,0.26)
(0.631273354,0.27)
(0.627564979,0.28)
(0.623580062,0.29)
(0.619298246,0.3)
(0.614696616,0.31)
(0.609749319,0.32)
(0.604427102,0.33)
(0.598696771,0.34)
(0.592520536,0.35)
(0.585855227,0.36)
(0.578651334,0.37)
(0.570851843,0.38)
(0.562390792,0.39)
(0.553191489,0.4)
(0.543164282,0.41)
(0.532203734,0.42)
(0.520185048,0.43)
(0.506959443,0.44)
(0.492348159,0.45)
(0.476134549,0.46)
(0.458053536,0.47)
(0.437777336,0.48)
(0.414895803,0.49)
(0.388888889,0.5)
(0.359087262,0.51)
(0.324614666,0.52)
(0.284301337,0.53)
(0.236549976,0.54)
(0.179120879,0.55)
(0.160160357,0.55291)
};

\addplot[color=black, fill=gray!50, mark=none] coordinates{
(0,0.57202)
(0.000265493,0.572)
(0.010579973,0.571)
(0.020634057,0.57)
(0.108773,0.56)
(0.16016,0.55291)
(0.160160357,0.55291)
(0.623068,0.090002)
(0.631,0.04241542)
(0.632,0.035895589)
(0.633,0.029247936)
(0.634,0.022468693)
(0.635,0.015553939)
(0.636,0.008499596)
(0.637,0.001301422)
(0.63717,0)
}\closedcycle;
\legend{$S_{\textrm{right}}$, $S_{\textrm{top}}$, $s$};
\addplot[color=gray, mark=none]coordinates{
(.62307,.09)
(1,.09)
};
\addplot[color=gray, mark=none]coordinates{
(.16016,.55291)
(.16016,1)
};
\node at (axis cs:0.08,0.8){$\mathcal{A}$};
\node at (axis cs:0.8,0.05){$\mathcal{B}$};
\node at (axis cs:0.6,0.65){$\mathcal{W}$};
\node at (axis cs:0.2,0.2){$Z$};
\end{axis}
\end{tikzpicture}
\caption{The well-formed canonical partition for $ f_1(z_1) = \frac{z_1^2(1-z_1)^2}{B(3,3)}$ and $f_2(z_2) = \frac{z_2^2(1-z_2)^3}{B(3,4)}$.}\label{betafig}
\end{center}
\end{figure}

\medskip
\noindent \textbf{Step (iv).}
We note that every point in $Z$  either lies below some point in $S_{\textrm{top}}$ or to the left of some point  in $S_{\textrm{right}}$. Thus, since $\mathcal{Y} \cup \{\vec{0}\}$ is a decreasing set and since both $S_{\textrm{top}}$ and $S_{\textrm{right}}$ are subsets of $\mathcal{Y}$, it follows that $Z \subseteq \mathcal{Y}$.

It is straightforward to computationally verify --- indeed, this was the reason for our choice of $p^*$ --- that $\nu^\mathcal{Y}(Z) = \mu^{\mathcal{X}_0}(\{\vec{0}\})$. Furthermore, using Theorem~\ref{regionthm} it is straightforward to verify that $\mu^{\mathcal{X}}|_{\mathcal{W}} \succeq \nu^{\mathcal{Y}}|_{\mathcal{W}}$.\footnote{The application of Theorem~\ref{regionthm} is made simpler by the fact that for $z_1 \in [.16016,.62307]$, the curve of $s$ lies below both $S_{\textrm{top}}$ and $S_{\textrm{right}}$.} Therefore, we can directly apply Theorem~\ref{generalthm} to this scenario to determine the optimal mechanism.

\begin{example}\label{betaexample}
The optimal mechanism for selling independent goods whose valuations are distributed according to $ f_1(z_1) = \frac{z_1^2(1-z_1)^2}{B(3,3)}$ and $f_2(z_2) = \frac{z_2^2(1-z_2)^3}{B(3,4)}$  has the following outcome for a bidder of type $(z_1,z_2)$ in terms of the function $s(\cdot)$ defined above:
\begin{itemize}
	\item If $(z_1,z_2) \in Z$, the bidder receives no goods and is charged 0.
	\item If $(z_1,z_2) \in \mathcal{A}$, the bidder receives item 1 with probability $-s'(z_1)$, item 2 with probability 1, and is charged $s(z_1) - z_1s'(z_1)$.
	\item If $(z_1, z_2) \in \mathcal{B}$, the bidder receives item 1 with probability 1, item 2 with probability $-1/s'(s^{-1}(z_2))$, and is charged $s^{-1}(z_2) - z_2/s'(s^{-1}(z_2))$.
	\item If $(z_1,z_2) \in \mathcal{W}$, the bidder receives both goods with probability 1 and is charged $.71307$.
\end{itemize}
\end{example}
Since $s(z_1)$ is not linear for $z_1 \in [0,.16016]$  and $z_1 \in [.62307,.63718]$, Example~\ref{betaexample}  shows that an optimal mechanism might offer a continuum of randomized outcomes.


\bibliographystyle{abbrv}
\bibliography{costasbib}

\appendix

\section{Verifying Stochastic Dominance}

\subsection{An Equivalent Condition for Stochastic Dominance}

While Strassen's theorem is useful, it may be difficult to directly verify that a measure $\alpha$ stochastically dominates another measure $\beta$. Instead, we can check an equivalent condition, given by Lemma~\ref{finiteunions}. In preparation to state this lemma, we need a few claims and definitions.


\begin{claim}
{Let $\alpha, \beta$ be finite measures on $\mathbb{R}^n_{\ge 0}$.} A necessary and sufficient condition for $\beta \preceq \alpha$ is that for all increasing measurable sets $A$, $\alpha(A) \geq \beta(A)$.
\end{claim}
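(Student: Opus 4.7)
The plan is to treat the two directions separately; only the reverse implication requires any real work.

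\textbf{Necessity.} Suppose $\beta \preceq \alpha$ and let $A$ be any increasing measurable set. The indicator $\mathbf{1}_A$ is a bounded measurable function, and it is increasing in the sense of the paper's definition: if $x \preceq y$ and $\mathbf{1}_A(x) = 1$, then $x \in A$, so by the increasingness of $A$ we have $y \in A$ and $\mathbf{1}_A(y) = 1$. Applying the definition of $\beta \preceq \alpha$ to $f = \mathbf{1}_A$ yields
\[
\alpha(A) \;=\; \int \mathbf{1}_A \, d\alpha \;\geq\; \int \mathbf{1}_A \, d\beta \;=\; \beta(A).
\]

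\textbf{Sufficiency.} Assume $\alpha(A) \geq \beta(A)$ for every increasing measurable set $A$, and let $f$ be any bounded increasing measurable function. My plan is to use the layer-cake representation. First reduce to $f \geq 0$ by subtracting $\inf f$; this is harmless because in every setting where the claim is applied in this paper the two measures have equal total mass (e.g.\ Claim~\ref{equalsone}), and a constant shift of $f$ changes $\int f\, d\alpha - \int f\, d\beta$ by $(\inf f)(\alpha(\mathbb{R}^n_{\geq 0}) - \beta(\mathbb{R}^n_{\geq 0})) = 0$. Once $f \geq 0$, Fubini's theorem gives
\[
\int f \, d\alpha \;=\; \int_0^{\|f\|_\infty} \alpha\bigl(\{x : f(x) > t\}\bigr)\, dt,
\]
and the analogous identity for $\beta$. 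The key observation is that each superlevel set $\{f > t\}$ is measurable and \emph{increasing}: if $f(x) > t$ and $x \preceq y$, then $f(y) \geq f(x) > t$. Hence the hypothesis applies at every level, giving $\alpha(\{f > t\}) \geq \beta(\{f > t\})$ for all $t \geq 0$; integrating this inequality in $t$ yields $\int f\, d\alpha \geq \int f\, d\beta$, which is exactly $\beta \preceq \alpha$.

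\textbf{Main obstacle.} There is no substantial obstacle: the argument is essentially the standard layer-cake / Fubini proof. The only subtle point worth flagging is the reduction to nonnegative $f$, since the hypothesis $\alpha(A) \geq \beta(A)$ for increasing $A$ only gives $\alpha(\mathbb{R}^n_{\geq 0}) \geq \beta(\mathbb{R}^n_{\geq 0})$ on its own; if one does not want to appeal to equal total mass, one can instead run the layer-cake argument on $\{f > t\}$ for $t \geq \inf f$, which remains increasing for every $t$, and use the same inequality level by level.
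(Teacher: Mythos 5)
Your proof is correct, and the sufficiency direction takes a genuinely different route from the paper's. The paper argues by contradiction: it rounds $f$ up to the nearest multiple of $2^{-k}$, writes the rounded function as a finite positive combination $\sum_s (r_s - r_{s-1}) I_s$ of indicators of increasing superlevel sets, and compares the two integrals term by term --- in effect a \emph{discrete} layer-cake decomposition. Your continuous layer-cake via Fubini/Tonelli is the same underlying idea with the discretization removed; it is shorter, avoids the $2^{-k}$ bookkeeping and the contradiction framing, and only requires noting that $\{f>t\}$ is increasing and that $(x,t)\mapsto f(x)-t$ is jointly measurable so Tonelli applies. The subtlety you flag about reducing to $f\ge 0$ is real and worth flagging: since the definition of dominance quantifies over all bounded increasing $f$ (not just nonnegative ones), the claim as literally stated can fail for measures of unequal total mass (take $\alpha = 2\delta_{(1,\dots,1)}$, $\beta=\delta_{0}$, and $f\equiv -1$), so equal total mass --- which holds in every application in the paper --- is genuinely needed. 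The paper's own proof performs the same shift (``adding the constant of $f(0)$'') without comment, so on this point you are, if anything, more careful than the original.
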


\begin{proof}
{Without loss of generality assume that $\alpha({\mathbb{R}^n_{\ge 0}}) =1$.} 

It is obvious that the condition is necessary by considering the indicator function of $A$. To prove sufficiency, suppose that the condition holds and that on the contrary, $\alpha$ does not stochastically dominate $\beta$. Then there exists an increasing, bounded, measurable function $f$ such that
$$\int f d \beta - \int f d \alpha > 2^{-k+1}$$
for some positive integer $k$. Without loss of generality, we may assume that $f$ is nonnegative, by adding the constant of $f(0)$ to all values. We now define the function $\tilde{f}$ by point-wise rounding $f$ upwards to the nearest multiple of $2^{-k}$. Clearly $\tilde{f}$ is increasing, measurable, and bounded. Furthermore, we have
$$\int \tilde{f} d\beta- \int \tilde{f}d\alpha \geq \int f d \beta- \int f d \alpha - 2^{-k} > 2^{-k+1} - 2^{-k} > 0.$$

We notice, however, that $\tilde{f}$ can be decomposed into the weighted sum of indicator functions of increasing sets. Indeed, let $\{r_1,\ldots, r_m\}$ be the set of all values taken by $\tilde{f}$, where $r_1 > r_2 > \cdots > r_m$. We notice that, for any $s\in \{1,\ldots,m\}$, the set $A_s = \{z : \tilde{f}(z) \geq r_s\}$ is increasing and measurable. Therefore, we may write
$$\tilde{f} = \sum_{s=1}^m (r_s-r_{s-1}) I_s$$
where $I_s$ is the indicator function for $A_s$  and where we set $r_0 = 0$. We now compute
$$\int \tilde{f} d\beta= \sum_{s=1}^m (r_s-r_{s-1})\beta(A_s) \leq \sum_{s=1}^m(r_s - r_{s-1}) \alpha(A_s) = \int \tilde{f} d\alpha,$$
contradicting the fact that $\int \tilde{f} d \beta > \int \tilde{f} d \alpha$.
\end{proof}

Given the above claim, to verify that  $\mu$ stochastically dominates  $\nu$, we must ensure that $\mu(A) \geq \nu(A)$ for all increasing measurable sets $A$. This verification might still be difficult, since $A$ has somewhat unconstrained structure. Our aim is to prove Lemma~\ref{finiteunions}, which will simplify this task further.

\begin{definition}
For any $z \in \mathbb{R}^n_{\geq 0}$, we define the \emph{base rooted at $z$} to be
$$B_z \triangleq \{ z': z \preceq z'\},$$
the minimal increasing set containing $z$. 
\end{definition}
We denote by $Q_k$ to be the set of points in $\mathbb{R}^n_{\geq 0}$ with all coordinates multiples of $2^{-k}$.

\begin{definition}
An increasing set $S$ is \emph{$k$-discretized} if $S = \bigcup_{z \in S \cap Q_k} B_z$. A \emph{corner} $c$ of a $k$-discretized set $S$ is a point $c \in S \cap Q_k$ such that there does not exist  $z \in S\setminus \{c\}$ with $z \preceq c$.
\end{definition}

\begin{lemma}
Every $k$-discretized set $S$ has only finitely many corners. Furthermore, $S = \cup_{c \in \mathcal{C}}B_c$, where $\mathcal{C}$ is the collection of corners of $S$.
\end{lemma}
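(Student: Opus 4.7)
The plan is to prove both assertions by reducing the definition of corner to the statement ``\,$c$ is a minimal element of $S \cap Q_k$ under $\preceq$\,'' and then invoking Dickson's lemma, which says that every nonempty subset of $\mathbb{N}^n$ under componentwise order has only finitely many minimal elements, and every element lies above at least one of them. Since $Q_k \cap \mathbb{R}^n_{\geq 0}$ is order-isomorphic to $\mathbb{N}^n$ (scale by $2^k$), we may transfer Dickson's lemma to our setting.

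First I would observe that the corner condition is equivalent to: there is no $c' \in (S \cap Q_k) \setminus \{c\}$ with $c' \preceq c$. The forward direction is immediate from the definition. For the reverse direction, suppose $z \in S \setminus \{c\}$ satisfies $z \preceq c$; since $S$ is $k$-discretized, there exists $z' \in S \cap Q_k$ with $z' \preceq z \preceq c$. If $z' = c$ then $c \preceq z \preceq c$ forces $z = c$, a contradiction; hence $z' \in (S \cap Q_k) \setminus \{c\}$ witnesses the failure of the equivalent condition. Thus corners of $S$ are precisely the $\preceq$-minimal elements of $S \cap Q_k$.

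Next, to prove finiteness, I would apply Dickson's lemma to $S \cap Q_k$ (viewed inside the rescaled lattice $\mathbb{N}^n$) to conclude that its set $\mathcal{C}$ of minimal elements is finite. For the decomposition $S = \bigcup_{c \in \mathcal{C}} B_c$, the inclusion $\supseteq$ is immediate: each corner lies in $S$ and $S$ is increasing, so $B_c \subseteq S$. For $\subseteq$, take any $z \in S$; by $k$-discretization there exists some $z' \in S \cap Q_k$ with $z' \preceq z$. By Dickson's lemma applied to the nonempty set $\{w \in S \cap Q_k : w \preceq z'\}$, there exists a $\preceq$-minimal element $c$ of that set. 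I must then check that $c$ is actually a corner of $S$ (not merely minimal within the restricted subset). If $c' \in S \cap Q_k$ satisfied $c' \preceq c$, then $c' \preceq c \preceq z'$, so $c'$ lies in the restricted set, and minimality of $c$ there forces $c' = c$. By the equivalence from the first paragraph, $c$ is a corner, and $c \preceq z' \preceq z$ shows $z \in B_c$.

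The only subtle step is the last one: ensuring that the minimal element we extract from the truncated set is genuinely a minimal element of all of $S \cap Q_k$. This is handled by transitivity, using that $z'$ was chosen to dominate our candidate $c$. The rest is a direct application of Dickson's lemma, which takes care of both finiteness and the existence of a minimal element below every point.
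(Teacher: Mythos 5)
Your proof is correct. It differs from the paper's in one substantive way: for finiteness of the corner set, the paper gives a self-contained induction on the dimension $n$ (fixing one corner $\hat{c}$ and observing that every other corner is strictly smaller than $\hat{c}$ in some coordinate, which reduces the count to finitely many $(n-1)$-dimensional instances), whereas you identify corners with the $\preceq$-minimal elements of $S \cap Q_k$ and invoke Dickson's lemma on the rescaled lattice $\mathbb{N}^n$. These are really the same fact---the paper's induction is in effect a proof of the relevant case of Dickson's lemma---so your version is shorter but imports a classical external result, while the paper's is longer but self-contained. For the covering statement $S = \bigcup_{c \in \mathcal{C}} B_c$ both arguments are the same well-founded descent; you package it as extracting a minimal element of the finite truncated set $\{w \in S \cap Q_k : w \preceq z'\}$ and checking by transitivity that it is minimal in all of $S \cap Q_k$, rather than as an explicit iteration terminating within $2^k \sum_j b_j$ steps. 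One point in your favor: the paper's definition of a corner quantifies over all $z \in S \setminus \{c\}$, not just lattice points, and you explicitly verify the equivalence with $\preceq$-minimality in $S \cap Q_k$ (using $k$-discretization to replace an arbitrary witness $z \preceq c$ by a lattice witness $z' \preceq c$); the paper uses this equivalence only implicitly when it asserts that a non-corner $b$ admits $b' \in S \cap Q_k$ with $b' \preceq b$ and $b' \neq b$.
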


\begin{proof}
We prove that there are finitely many corners by induction on the dimension, $n$. In the case $n=1$ the result is obvious, since if $S$ is nonempty it has exactly one corner. Now suppose $S$ has dimension $n$. Pick some corner $\hat{c} = (c_1, \ldots, c_n) \in S$. We know that any other corner must be strictly less than $\hat{c}$ in some coordinate. Therefore,
$$|\mathcal{C}| \leq {1 +} \sum_{i=1}^n \left| \left\{c \in \mathcal{C} \textrm{ s.t. } c_i < \hat{c}_i \right\}  \right| = {1 +} \sum_{i=1}^n \sum_{j = 1}^{2^k\hat{c}_i} \left| c \in \mathcal{C} \textrm{ s.t. } c_i = \hat{c}_i-2^{-k}j  \right|.$$
By the inductive hypothesis, we know that each set $\left\{ c \in \mathcal{C} \textrm{ s.t. } c_i = \hat{c}_i-2^{-k}j  \right\}$ is finite, since it is contained in the set of corners of the $(n-1)$-dimensional {subset of $S$ whose points have $i^{th}$ coordinate $\hat{c}_i - 2^{-k}j$.} Therefore, $|\mathcal{C}| $ is finite.

To show that $S =  \bigcup_{c \in \mathcal{C}}B_c$, pick any $z \in S$. Since $S$ is $k$-discretized, there exists a $b \in S \cap Q_k$ such that $z \in B_b$. If $b$ is a corner, then $z$ is clearly contained in $\bigcup_{c \in \mathcal{C}}B_c$. If $b$ is not a corner, then there is some other point $b' \in S \cap Q_k$ with $b' \preceq b$. If $b'$ is a corner, we're done. Otherwise, we repeat this process at most $2^k \sum_j b_j$ times, after which time we will have reached a corner $c$ of $S$. By construction, we have $z \in B_c$, as desired.\end{proof}

We now show that, to verify that one measure dominates another on all increasing sets, it suffices to verify that this holds for all sets that are the union of finitely many bases.

\begin{lemma}\label{finiteunions}
Let $g, h : \mathbb{R}^n_{\geq 0} \rightarrow \mathbb{R}_{\geq 0}$ be bounded density functions such that $\int_{\mathbb{R}^n_{\geq 0}} g(\vec{x}) d \vec{x}$ and $\int_{\mathbb{R}^n_{\geq 0}} h(\vec{x})d\vec{x}$ are finite. Suppose that, for all finite collections $Z$ of points in $\mathbb{R}^n_{\geq 0}$, we have
$$\int_{\bigcup_{z \in Z}B_z}g(\vec{x})d\vec{x} \geq \int_{\bigcup_{z \in Z}B_z}h(\vec{x})d\vec{x}.$$
Then, for all increasing sets $A$,
$$\int_A g(\vec{x})d\vec{x} \geq \int_A h(\vec{x})d\vec{x}.$$
\end{lemma}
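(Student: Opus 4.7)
The plan is to approximate $A$ from inside by a countable union of bases rooted at rational points, apply the hypothesis along finite truncations, and dispatch the leftover discrepancy via a density-point argument. I enumerate the rational points of $A$ as $\{q_n\}_{n\geq 1}$, and set $A_N = \bigcup_{n=1}^N B_{q_n}$ and $A_\infty = \bigcup_{n\geq 1} B_{q_n}$. Since $A$ is increasing and each $q_n \in A$, each $B_{q_n} \subseteq A$, so $A_\infty \subseteq A$. Each $A_N$ is a finite union of bases, so the hypothesis immediately yields $\int_{A_N} g(\vec{x})\,d\vec{x} \geq \int_{A_N} h(\vec{x})\,d\vec{x}$ for every $N$. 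Monotone convergence applied to the indicator functions $\mathbf{1}_{A_N} \uparrow \mathbf{1}_{A_\infty}$ (with $g, h$ integrable as dominants, since both densities have finite total mass) then passes the inequality to $\int_{A_\infty} g(\vec{x})\,d\vec{x} \geq \int_{A_\infty} h(\vec{x})\,d\vec{x}$.

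The crux is to show that $A \setminus A_\infty$ has Lebesgue measure zero, so that absolute continuity of $g\,d\vec{x}$ and $h\,d\vec{x}$ with respect to Lebesgue measure transports the inequality from $A_\infty$ to $A$. I claim that no two distinct points $a, b \in A \setminus A_\infty$ can satisfy $a_i < b_i$ for every $i$; equivalently, $A \setminus A_\infty$ is an antichain in the componentwise \emph{strict} order. For, suppose $z \in A \setminus A_\infty$ and some $w \in A$ satisfied $w_i < z_i$ in every coordinate. By density of $\mathbb{Q}^n$ in the open box $\prod_i (w_i, z_i)$, I pick a rational point $q$ with $w \preceq q \preceq z$; upward-closedness of $A$ forces $q \in A$, so $q$ appears in the enumeration $\{q_n\}$, and hence $z \in B_q \subseteq A_\infty$, contradicting $z \notin A_\infty$. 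Consequently any two distinct points of $A \setminus A_\infty$ fail to be strictly comparable.

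Finally, any strict antichain $M \subseteq \mathbb{R}^n_{\geq 0}$ has Lebesgue measure zero: otherwise some $z \in M$ is a point of Lebesgue density $1$, and for sufficiently small $r$ the two disjoint open boxes $\prod_i (z_i - r, z_i)$ and $\prod_i (z_i, z_i + r)$ must each meet $M$ in positive Lebesgue measure (else the density at $z$ would be bounded strictly below $1$ by a constant depending only on $n$), yielding $a, b \in M$ with $a_i < z_i < b_i$ for all $i$ and hence strictly comparable, a contradiction. The main obstacle I anticipate is noticing the antichain structure hidden in $A \setminus A_\infty$; once one has that structural observation, both the density-point step and the monotone-convergence step are routine. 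The edge case when $A$ has no rational points is handled uniformly: the same strict-comparability argument then shows $A$ itself is a strict antichain, hence Lebesgue-null, so both integrals vanish and the inequality holds trivially.
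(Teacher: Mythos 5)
Your proof is correct, but it takes a genuinely different route from the paper's. The paper sandwiches $A$ between two $k$-discretized sets $A_k^l\subseteq A\subseteq A_k^u$ built from the dyadic grid $Q_k$, proves separately that every $k$-discretized set is a \emph{finite} union of bases (the corner-counting lemma), and then shows $\int_{A_k^u}g-\int_{A_k^l}g\to 0$ by splitting into a tail estimate (finite total mass) and a volume estimate on the thin shell between the two approximants (using $|g|_{\sup}<\infty$). You instead approximate from inside only, by bases rooted at the rational points of $A$, pass to the limit by monotone convergence, and then kill the residual set $A\setminus A_\infty$ by observing it is an antichain for the coordinatewise strict order and that any such antichain is Lebesgue-null via the Lebesgue density theorem (two disjoint sub-boxes of a centered cube, each of relative volume $2^{-n}$, must both meet a positive-density set). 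What your approach buys: it sidesteps the corner-counting lemma entirely (your approximants are finite unions of bases by construction) and it never uses boundedness of $g$ and $h$, only their integrability and the absolute continuity of $g\,d\vec{x}$, $h\,d\vec{x}$ with respect to Lebesgue measure. What it costs: it invokes the Lebesgue density theorem, a heavier tool than the paper's elementary volume estimates, and it implicitly requires $A\setminus A_\infty$ to be Lebesgue measurable so that the density theorem applies --- this is fine here, since $A_\infty$ is Borel and any increasing set is Lebesgue measurable (its topological boundary is itself a closed strict antichain, hence null by your own argument), but it is worth stating.
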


\begin{proof}
Let $A$ be an increasing set. We clearly have $A = \bigcup_{z \in A} B_z$. For any point $z \in \mathbb{R}^n_{\geq 0}$, denote by $z^{n,k}$ the point in $\mathbb{R}^n_{\geq 0}$ whose $i^{th}$ component is the maximum of 0 and $z_i - 2^{-k}$ for each $i$.

We define
$$A_k^l \triangleq \bigcup_{z \in A \cap Q_k} B_z; \qquad A_k^u \triangleq \bigcup_{z \in A \cap Q_k} B_{z^{n,k}}.$$
It is clear that both $A_k^l$ and $A_k^u$ are $k$-discretized. Furthermore, for any $z \in A$ there exists a $z' \in A \cap Q_k$ such that each component of $z'$ is at most $2^{-k}$ more than the corresponding component of $z$. Therefore
$$A_k^l \subseteq A \subseteq A_k^u.$$

We now will bound
$$\int_{A_k^u} g d\vec{x} - \int_{A_k^l} g d\vec{x}.$$
Let
$$W_k = \left\{z : z_i > k \textrm{ for some } i   \right\}; \qquad W^c_k = \left\{z : z_i \leq k \textrm{ for all } i \right\}$$
We notice that
$$\int_{A_k^u \cap W_k} g d\vec{x} - \int_{A_k^l \cap W_k}g d\vec{x} \leq \int_{W_k} g d\vec{x}.$$
Furthermore, since $\lim_{k \rightarrow \infty}\int_{W^c_k}g d\vec{x} = \int_{\mathbb{R}^n_{\geq 0}}g d\vec{x}$,  we know that $\lim_{k \rightarrow \infty} \int_{W_k}g d\vec{x} = 0$. Therefore, 
$$\lim_{k \rightarrow \infty}\left(\int_{A_k^u \cap W_k} g d\vec{x} - \int_{A_k^l \cap W_k}g d\vec{x}\right) = 0.$$

Next, we bound
$$\int_{A_k^u \cap W^c_k} g d\vec{x} - \int_{A_k^l \cap W^c_k}g d\vec{x} \leq |g|_{\sup} \left(V(A_k^u \cap W^c_k) - V(A_k^l \cap W^c_k)  \right)$$
where $|g|_{\sup} < \infty$ is the supremum of $g$, and $V(\cdot)$ denotes the Lebesgue measure.

For each $m \in \{1,\ldots, n+1\}$ and  $z \in \mathbb{R}^n_{\geq 0}$, we define the point $z^{m,k}$ by:
$$z^{m,k}_i =
\begin{cases}
\max\{0, z_i - 2^{-k} \} & \mbox{ if } i < m\\
z_i & \mbox{ otherwise}
\end{cases}$$
and set
$$A_k^m \triangleq \bigcup_{z \in A \cap Q_k}B_{z^{m,k}}.$$
We have, by construction, $A_k^l = A_k^1$ and $A_k^u = A_k^{n+1}$. Therefore,
$$V(A_k^u \cap W^c_k) - V(A_k^l \cap W^c_k) = \sum_{m=1}^n \left(V(A_k^{m+1} \cap W_k^c)-V(A_k^{m} \cap W_k^c) \right).$$
We notice that, for any point $ (z_1,z_2, \ldots, z_{m-1},z_{m+1},\ldots, z_n) \in [0,k]^{n-1}$, there is an interval $I$ of length at most $2^{-k}$ such that the point $$(z_1,z_2,\ldots,z_{m-1},w,z_{m-2},\ldots,z_n) \in (A_k^{m+1} \setminus A_k^{m}) \cap W_k^c$$
if and only if $w \in I$. Therefore,
$$V(A_k^{m+1} \cap W_k^c)-V(A_k^{m} \cap W_k^c) \leq \int_0^k \cdots \int_0^k \int_0^k \cdots \int_0^k 2^{-k}dz_1\cdots dz_{m-1}dz_{m+1}\cdots dz_n = 2^{-k}k^{n-1}.$$

Therefore, we have the bound
$$
|g|_{\sup} \left(V(A_k^u \cap W^c_k) - V(A_k^l \cap W^c_k)  \right) \leq
|g|_{\sup}\sum_{m=1}^n 2^{-k}k^{n-1} = n|g|_{\sup}2^{-k}k^{n-1}
$$
and thus
\begin{align*}
\int_{A_k^u} g  d\vec{x} - \int_{A_k^l} gd\vec{x}  &= \int_{A_k^u \cap W_k}g d\vec{x} - \int_{A_k^l \cap W_k} gd\vec{x}  + \int_{A_k^u \cap W^c_k} g d\vec{x} - \int_{A_k^l \cap W^c_k} g d\vec{x} \\
&\leq  \left( \int_{A_k^u \cap W_k}g d\vec{x} - \int_{A_k^l \cap W_k} gd\vec{x}  \right) + n|g|_{\sup}2^{-k}k^{n-1}.
\end{align*}
In particular, we have
$$\lim_{k \rightarrow \infty} \left(\int_{A_k^u} g d\vec{x} - \int_{A_k^l} g d\vec{x}\right) = 0.$$
Since $\int_{A_k^u} g d\vec{x} \geq \int_A g d\vec{x} \geq  \int_{A_k^l}g d\vec{x}$, we have
$$\lim_{k \rightarrow \infty} \int_{A_k^u}g d\vec{x} = \int_A g d\vec{x} = \lim_{k \rightarrow \infty} \int_{A_k^l}g d\vec{x}.$$
Similarly, we have
$$\int_A h d\vec{x} = \lim_{k \rightarrow \infty}\int_{A_k^l} h d\vec{x}$$
and thus
$$\int_A (g-h) d\vec{x} = \lim_{k \rightarrow \infty}\left(\int_{A_k^l} g d\vec{x} - \int_{A_k^l} h d\vec{x} \right).$$
Since $A_k^l$ is $k$-discretized, it has finitely many corners. Letting $Z_k$ denote the corners of $A_k^l$, we have $A_k^l = \bigcup_{z \in Z_k}B_z$, and thus by our assumption $\int_{A_k^l}gd\vec{x} - \int_{A_k^l}hd\vec{x} \geq 0$
for all $k$. Therefore $\int_A(g-h)d\vec{x} \geq 0$, as desired.
\end{proof}
As an immediate corollary of Lemma~\ref{finiteunions}, we see that to verify $\nu \preceq \mu$ it suffices to check that $\nu(B) \leq \mu(B)$ for all sets $B$ which are unions of finitely many bases.

\subsection{Stochastic Dominance in Two Dimensions: Proof of Theorem~\ref{regionthm}}

In this section, we prove Theorem~\ref{regionthm}, which is a useful sufficient condition for stochastic dominance in two dimensions. To recap, we have $\mathcal{C} = [c_1, d_1^+) \times [c_2, d_2^+)$, $R$ a decreasing nonempty subset of $\mathcal{C}$, and $g, h : \mathcal{C} \rightarrow \mathbb{R}_{\geq 0}$ are bounded density functions which are $0$ on $R$, have finite total mass and satisfy
\begin{itemize}
	\item $\int_{\mathcal{C}} (g - h)dxdy \geq 0$.
	\item For any basis vector $e_i \in \{(0,1),(1,0) \}$ and any point $z^* \in R$:
	$$\int_0^{d_i^+ - z^*_i} g(z^* + \tau e_i) - h(z^* + \tau e_i)d\tau \leq 0.$$
	\item There exist non-negative functions $\alpha:  [c_1, d_1^+) \rightarrow \mathbb{R}_{\geq 0}$ and $\beta :  [c_2, d_2^+) \rightarrow \mathbb{R}_{\geq 0}$ and an increasing function $\eta : \mathcal{C} \rightarrow \mathbb{R}$ such that
	$$g(z_1,z_2) - h(z_1,z_2) = \alpha(z_1)\cdot \beta(z_2) \cdot \eta(z_1,z_2)$$
	for all $(z_1,z_2) \in \mathcal{C} \setminus R$.
\end{itemize}
We aim to prove that $g \succeq h$.

We begin by defining, for any  $c_1 \leq a \leq b \leq d_1^+$, the function $\zeta_a^b : [c_2,d_2^+) \rightarrow \mathbb{R}$ by
$$\zeta_a^b(w) \triangleq \int_a^b(g(z_1,w)-h(z_1,w))dz_1.$$
This function represents, for each $w$, the integral of $g - h$ along the line from $(a,w)$ to $(b,w)$.
\begin{claim}
If $(a,w) \in R$, then $\zeta_a^b(w) \leq 0$.
\end{claim}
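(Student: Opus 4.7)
My plan is to hold $w$ fixed and study $F(b) := \zeta_a^b(w) = \int_a^b (g-h)(z_1,w)\,dz_1$ as a function of $b \in [a, d_1^+)$. I will show that $F$ is V-shaped on $[a,d_1^+]$ (first non-increasing, then non-decreasing) with $F(a) = 0$ and $F(d_1^+) \leq 0$, from which $F(b) \leq 0$ follows immediately since a V-shaped function attains its maximum at an endpoint.

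First, applying the second hypothesis of Theorem~\ref{regionthm} at the point $z^* = (a,w) \in R$ with basis vector $e_i = (1,0)$ gives $F(d_1^+) = \int_a^{d_1^+}(g-h)(z_1,w)\,dz_1 \leq 0$. Next, since $R$ is a decreasing set containing $(a,w)$, the slice $\{z_1 \in [c_1,d_1^+) : (z_1,w) \in R\}$ is downward-closed and hence equals an interval of the form $[c_1, z_1^*]$ (or $[c_1, z_1^*)$) for some $z_1^* \geq a$. On this slice both $g$ and $h$ vanish, so $(g-h)(z_1,w) = 0$ for $z_1 \in [a, z_1^*]$ and therefore $F$ is identically zero on $[a, z_1^*]$.

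For $z_1 > z_1^*$ I invoke the factorization $(g-h)(z_1,w) = \alpha(z_1)\beta(w)\eta(z_1,w)$ with $\alpha, \beta \geq 0$ and $\eta(\cdot, w)$ non-decreasing in $z_1$ (which is immediate from $\eta$ being increasing with respect to $\preceq$). Setting $b_0 := \inf\{z_1 > z_1^* : \eta(z_1,w) \geq 0\} \in [z_1^*, d_1^+]$, the monotonicity of $\eta(\cdot,w)$ forces $\eta(\cdot,w) \leq 0$ on $(z_1^*, b_0)$ and $\eta(\cdot,w) \geq 0$ on $(b_0, d_1^+)$. Since $F'(b) = \alpha(b)\beta(w)\eta(b,w)$, this gives $F' \leq 0$ on $(z_1^*, b_0)$ and $F' \geq 0$ on $(b_0, d_1^+)$, so $F$ is non-increasing on $[z_1^*, b_0]$ and non-decreasing on $[b_0, d_1^+)$.

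Putting the pieces together, $F$ is constant on $[a, z_1^*]$, non-increasing on $[z_1^*, b_0]$, and non-decreasing on $[b_0, d_1^+)$, so its maximum over $[a, d_1^+]$ is attained at one of the endpoints $a$ or $d_1^+$. Both values are $\leq 0$, yielding $\zeta_a^b(w) = F(b) \leq 0$ for every $b$, as claimed. I do not expect a substantive obstacle: the only conceptual step is recognizing the V-shape of $F$ from the monotonicity of $\eta$, after which the conclusion is a direct consequence of a one-dimensional monotonicity argument. Mild care is needed to allow $z_1^*$ to equal $a$ or $d_1^+$ and to handle measure-zero ambiguities where $\eta(\cdot, w) = 0$, but these do not affect the integral bounds.
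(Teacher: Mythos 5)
Your proof is correct and rests on exactly the same three ingredients as the paper's: the second hypothesis applied at $(a,w)$ to get $\zeta_a^{d_1^+}(w)\le 0$, the decreasing property of $R$ to confine the $R$-portion of the horizontal slice to an initial interval, and the factorization with $\eta(\cdot,w)$ increasing to show the integrand changes sign at most once (from $\le 0$ to $\ge 0$) along the slice. Your "V-shaped $F(b)$, maximum at an endpoint" framing is just a clean repackaging of the paper's argument that $\zeta_a^b(w)\le\zeta_a^{d_1^+}(w)\le 0$ once $g>h$ somewhere in $[a,b]$.
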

\begin{proof}
The inequality trivially holds unless there exists a $z_1 \in [a, b]$ such that $g(z_1,w) > h(z_1,w)$. So suppose such a $z_1$ exists. It must be that $(z_1,w) \notin R$ as both $g$ and $h$ are $0$ in $R$. Indeed, because $R$ is a decreasing set it is also true that $(\tilde{z}_1,w) \notin R$ for all $\tilde{z}_1 \geq z_1$. This implies by assumption that
$$g(\tilde{z}_1,w) - h(\tilde{z}_1,w) = \alpha(\tilde{z}_1)\cdot \beta(w) \cdot \eta(\tilde{z}_1,w),$$
for all $\tilde{z}_1 \geq z_1$. Now given that $g(z_1,w) > h(z_1,w)$ and $\eta(\cdot,w)$ is an increasing function, we get that $g(\tilde{z}_1,w) \geq h(\tilde{z}_1,w)$ for all $\tilde{z}_1 \geq z_1$. Therefore, we have
$$\zeta_a^{z_1}(w)\leq \zeta_a^b(w) \leq \zeta_a^{d_1^+}(w).$$
We notice, however, that $\zeta_a^{d_1^+}(w) \leq 0$ by assumption, and thus the claim is proven.\end{proof}

We now claim the following:
\begin{claim}\label{signflip}
Suppose that $\zeta_a^b(w^*) > 0$ for some $w^* \in [c_2, d_2^+)$. Then $\zeta_a^b(w) \geq 0$ for all $w \in [w^*, d_2^+)$. 
\end{claim}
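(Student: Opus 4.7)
The plan is to reduce to the case where no point of $[a,b] \times \{w\}$ lies in $R$ for any $w \geq w^*$, after which the product structure of $g-h$ makes the claim almost immediate. The crux is to leverage the immediately preceding claim to dispose of $R$ entirely.

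First I would invoke the preceding claim: since it asserts that $(a,w) \in R$ forces $\zeta_a^b(w) \leq 0$, the hypothesis $\zeta_a^b(w^*) > 0$ rules out $(a, w^*) \in R$. Because $R$ is decreasing, and $(a, w^*) \preceq (a, w)$ for every $w \geq w^*$, the point $(a,w)$ cannot lie in $R$ either (otherwise $(a, w^*) \in R$ by closure of $R$ under $\preceq$). Applying decreasingness once more along the first coordinate, since $(a,w) \preceq (z_1, w)$ for any $z_1 \in [a, b]$, if $(z_1, w) \in R$ then $(a, w) \in R$, a contradiction. Hence the entire horizontal strip $[a,b] \times [w^*, d_2^+)$ avoids $R$, and on it the product representation $(g-h)(z_1,w) = \alpha(z_1)\beta(w)\eta(z_1,w)$ holds everywhere.

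Second, with this in hand I would write
$$\zeta_a^b(w) = \beta(w) \int_a^b \alpha(z_1)\,\eta(z_1,w)\, dz_1, \qquad \zeta_a^b(w^*) = \beta(w^*) \int_a^b \alpha(z_1)\,\eta(z_1,w^*)\, dz_1.$$
Since $\alpha, \beta \geq 0$, the hypothesis $\zeta_a^b(w^*) > 0$ forces both $\beta(w^*) > 0$ and $\int_a^b \alpha(z_1)\eta(z_1,w^*)\, dz_1 > 0$. By monotonicity of $\eta$ in its second argument, $\eta(z_1,w) \geq \eta(z_1,w^*)$ for every $z_1$ when $w \geq w^*$, and since $\alpha(z_1) \geq 0$ this gives the pointwise bound $\alpha(z_1)\eta(z_1,w) \geq \alpha(z_1)\eta(z_1,w^*)$. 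Integrating over $[a,b]$ yields $\int_a^b \alpha(z_1)\eta(z_1,w)\,dz_1 > 0$, and multiplication by $\beta(w) \geq 0$ produces $\zeta_a^b(w) \geq 0$.

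The only subtlety worth flagging is that very first step — recognizing that the preceding claim precisely excludes the awkward situation where $R$ intersects the horizontal strip above $w^*$. Without that observation one is forced to track how the boundary of $R$ shifts with $w$ and to invoke the vertical-line integral conditions on $R$, which becomes genuinely messy because the contribution of $T_w \setminus T_{w^*}$ cannot easily be signed. With the observation in place, the remainder is nothing more than pointwise monotonicity of a product of non-negative and increasing factors.
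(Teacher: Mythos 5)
Your proof is correct and follows essentially the same route as the paper's: both use the preceding claim plus the decreasing property of $R$ to show the strip $[a,b]\times[w^*,d_2^+)$ avoids $R$, then apply the product form $\alpha\cdot\beta\cdot\eta$ together with monotonicity of $\eta$ and nonnegativity of $\alpha,\beta$ (the paper phrases the last step as $\zeta_a^b(w)\geq \frac{\beta(w)}{\beta(w^*)}\zeta_a^b(w^*)\geq 0$, which is the same computation).
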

\begin{proof}
Given that $\zeta_a^b(w^*) > 0$, our previous claim implies that $(a,w^*)\not\in R$. Furthermore, since $R$ is a decreasing set and $w \geq w^*$, follows that $(a,w) \not\in R$, and furthermore that $(c,w) \not\in R$ for any $c \geq a$ in $[c_1, d_1^+)$. Therefore, we may write
%
$$\zeta_a^b(w) = \int_a^b  (g(z_1,w) - h(z_1,w))dz_1 = \int_a^b (\alpha(z_1)\cdot \beta(w)\cdot \eta(z_1,w))dz_1.$$
Similarly, $(c,w^*) \not\in R$ for any $c \geq a$, so
$$\zeta_a^b(w^*) = \int_a^b (\alpha(z_1)\cdot \beta(w^*)\cdot \eta(z_1,w^*))dz_1.$$
Note that, since $\zeta_a^b(w^*) > 0$, we have $\beta(w^*) > 0$.
Thus, since $\eta$ is increasing,
\begin{align*}
\zeta_a^b(w) &\geq \int_a^b (\alpha(z_1)\cdot \beta(w) \cdot \eta(z_1, w^*))dz_1= \frac{\beta(w)}{\beta(w^*)} \zeta_a^b(w^*) \geq 0,
\end{align*}
as desired.\end{proof}

We extend $g$ and $h$ to all of $\mathbb{R}^2_{\geq 0}$ by setting them to be 0 outside of $\mathcal{C}$. By Claim~\ref{finiteunions}, to prove that $g \succeq h$ it suffices to prove that $\int_A g dxdy \geq \int_A h dxdy$ for all sets $A$ which are the union of finitely many bases. Since $g$ and $h$ are 0 outside of $\mathcal{C}$, it  suffices to consider only bases $B_{z'}$ where $z' \in \mathcal{C}$, since otherwise we can either remove the base (if it is disjoint from $\mathcal{C}$) or can increase the coordinates of $z'$ moving it to $\cal C$ without affecting the value of either integral.

We now prove Theorem~\ref{regionthm} by induction on the number of bases in the union.
\begin{itemize}
	\item \textbf{Base Case.}
	
	We aim to show $\int_{B_r} (g-h)dxdy \geq 0$ for any $r = (r_1,r_2) \in \mathcal{C}$.  We have
	\begin{align*}
	\int_{B_r} (g-h)dxdy = \int_{r_2}^{d_2^+} \int_{r_1}^{d_1^+} (g-h)dz_1dz_2 = \int_{r_2}^{d_2^+} \zeta_{r_1}^{d_1^+}(z_2)dz_2.
	\end{align*}
	By Claim~\ref{signflip}, we know that either $\zeta_{r_1}^{d_1^+}(z_2) \geq 0$ for all $z_2 \geq r_2$, or $\zeta_{r_1}^{d_1^+}(z_2) \leq 0$ for all $z_2$ between $c_2$ and $r_2$. In the first case, the integral is clearly nonnegative, so we may assume that we are in the second case. We then have
	\begin{align*}
	\int_{r_2}^{d_2^+} \zeta_{r_1}^{d_1^+}(z_2)dz_2 \geq \int_{c_2}^{d_2^+} \zeta_{r_1}^{d_1^+}(z_2)dz_2 =  \int_{c_2}^{d_2^+} \int_{r_1}^{d_1^+} (g-h)dz_1dz_2
	= \int_{r_1}^{d_1^+} \int_{c_2}^{d_2^+} (g-h)dz_2dz_1.
	\end{align*}
By an analogous argument to that above, we know that either $\int_{c_2}^{d_2^+} (g-h)(z_1,z_2)dz_2$ is nonnegative for all $z_1 \geq r_1$ (in which case the desired inequality holds trivially) or is nonpositive for all $z_1$ between $c_1$ and $r_1$. We assume therefore that we are in the second case, and thus
\begin{align*}
\int_{r_1}^{d_1^+} \int_{c_2}^{d_2^+} (g-h)dz_2dz_1 \geq \int_{c_1}^{d_1^+} \int_{c_2}^{d_2^+} (g-h)dz_2dz_1 = \int_{\mathcal{C}}(g-h)dxdy,
\end{align*}
which is nonnegative by assumption.	
	\item \textbf{Inductive Step.}
	Suppose that we have proven the result for all finite unions of at most $k$ bases. Consider now a set
$$A =  \bigcup_{i=1}^{k+1}  B_{z^{(i)}}.$$
We may assume that all $z^{(i)}$ are distinct, and that there do not exist distinct $z^{(i)}$, $z^{(j)}$ with $z^{(i)} \preceq z^{(j)}$, since otherwise we could remove one such $B_{z^{(i)}}$ from the union without affecting the set $A$, and the desired inequality would follow from the inductive hypothesis.

Therefore, we may order the $z^{(i)}$ such that 
$$c_1\leq z^{(k+1)}_1 < z^{(k)}_1< z^{(k-1)}_1 < \cdots < z^{(1)}_1$$ 
and
$$c_2 \leq z^{(1)}_2 < z^{(2)}_2 < z^{(3)}_2 < \cdots < z^{(k+1)}_2.$$ 

\begin{figure}[h!]
\begin{center}
\begin{tikzpicture}

\begin{axis}[ymin=1.1, ymax=3.3, xmin=0, xmax=1, xlabel=$z_1$, ylabel=$z_2$,  ytick pos=left, ytick={1.1}, yticklabels={$c_2$}, xtick={0},xticklabels={$c_1$}]



 
  \addplot+[color=gray, fill=gray!50, domain=0:2,mark=none]
 {1.9-2*x*x}
 \closedcycle;

\addplot[color=black, mark=*] coordinates{
(.1,2.3)
(.1,7)
};

\addplot[color=black, mark=none] coordinates{
(.1,2.3)
(.3,2.3)
};

\addplot[color=black, mark=none] coordinates{
(.3,2.3)
(.3,1.6)
};

\addplot[color=black, mark=*] coordinates{
(.3,1.6)
};

\addplot[color=black, mark=none] coordinates{
(.3,1.6)
(.5,1.6)
};

\addplot[color=black, mark=none] coordinates{
(.5,1.6)
(.5,1.4)
};

\addplot[color=black, mark=*] coordinates{
(.5,1.4)};

\addplot[color=black, mark=none] coordinates{
(.5,1.4)
(7,1.4)
};

\node at (axis cs:.1,2.2){$z^{(k+1)}$};
\node at (axis cs:.3,1.5){$z^{(k)}$};
\node at (axis cs:.5,1.3){$z^{(k-1)}$};
\node at (axis cs:0.15,1.4){$R$};

\end{axis}
\end{tikzpicture}
\end{center}
\caption{We show that either decreasing $z^{(k+1)}_2$ to $z^{(k)}_2$ or removing $z^{(k+1)}$ entirely decreases the value of $\int_A (f-g)$. In either case, we can apply our inductive hypothesis.}
\end{figure}
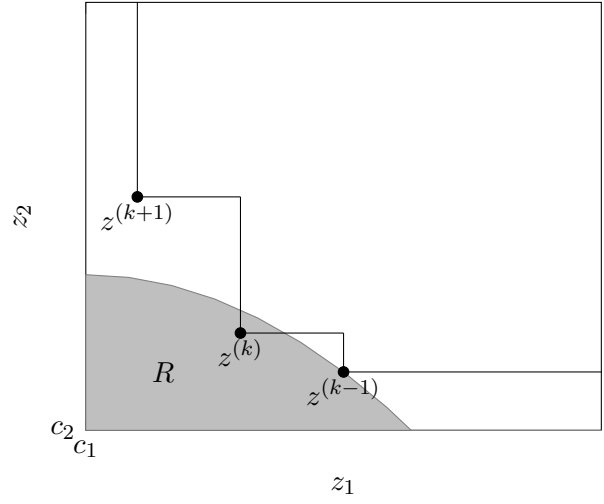

By Claim~\ref{signflip}, we know that one of the two following cases must hold:
\begin{itemize}
	\item \textbf{Case 1:} $\zeta_{z_1^{(k+1)}}^{z_1^{(k)}}(w) \leq 0$ for all $c_2 \leq w \leq z_2^{(k+1)}$.
	
	In this case, we see that 
	$$\int_{z_2^{(k)}}^{z_2^{(k+1)}} \int_{z_1^{(k+1)}}^{z_1^{(k)}} (f-g)dz_1dz_2 = \int_{z_2^{(k)}}^{z_2^{(k+1)}}  \zeta_{z_1^{(k+1)}}^{z_1^{(k)}}(w)dw \leq 0.$$
	For notational purposes, we will denote by $(f-g)(S)$ the integral $\int_S (f-g)dz_1dz_2$ for any set $S$. We now have
	\begin{align*}
	(f-g)(A) & \geq (f-g)(A)+ (f-g)\left(\left\{z: z_1^{(k+1)}\leq z_1 \leq z_1^{(k)} \textrm{ and } z_2^{(k)}\leq z_2 \leq z_2^{(k+1)} \right\} \right)\\
	&= (f-g)\left({ \bigcup_{i=1}^{k}  B_{z^{(i)}}  \cup B_{(z^{(k+1)}_1,z_2^{(k)})} }\right)\\
	&= (f-g)\left({ \bigcup_{i=1}^{k-1}  B_{z^{(i)}}  \cup B_{(z^{(k+1)}_1,z_2^{(k)})} }\right)
	\end{align*}
	where the last equality follows from $(z_1^{(k)},z_2^{(k)}) \succeq (z^{(k+1)}_1,z_2^{(k)})$. Now the inductive hypothesis implies that the quantity in the last line of the above derivation is $\ge 0$.

	\item \textbf{Case 2:} $\zeta_{z_1^{(k+1)}}^{z_1^{(k)}}(w) \geq 0$ for all $w \geq z_2^{(k+1)}$.
	
	In this case, we have
	$$\int_{z_2^{(k+1)}}^{d_2^+} \int_{z_1^{(k+1)}}^{z_1^{(k)}}(f-g)dz_1dz_2 = \int_{z_2^{(k+1)}}^{d_2^+}  \zeta_{z_1^{(k+1)}}^{z_1^{(k)}}(w)dw \geq 0.$$
	Therefore, it follows that
	\begin{align*}
	(f-g)(A) &= (f-g)\left(\bigcup_{i=1}^{k}  B_{z^{(i)}}  \right) + (f-g)\left( \left\{z: z_1^{(k+1)}\leq z_1 \leq z_1^{(k)} \textrm{ and }   z_2^{(k+1)} \leq z_2 \right\}  \right)\\
	&\geq (f-g)\left(\bigcup_{i=1}^{k}  B_{z^{(i)}}  \right) \geq 0,
	\end{align*}
where the final inequality follows from the inductive hypothesis.
\end{itemize}
\end{itemize}

\section{Bundling Two Power-Law Items}\label{powerappendix}

In this section, we complete the proof that a take-it-or-leave-it offer of the grand bundle is the optimal mechanism for selling two power-law items with $c_1 = 6$ and $c_2 = 7$. Based on our calculations in Section~\ref{powerlaw},  it remains to show that $\mu^{\mathcal{X}}|_{W_{p^*}} \succeq \nu^{\mathcal{Y}}|_{W_{p^*}}$, where $p^* = .35725$. We set $g = \mu_d^\mathcal{X}|_{W_{p^*}}$ and $h = \nu_d^\mathcal{Y}|_{W_{p^*}}$ and now must verify that the conditions of Theorem~\ref{regionthm} indeed apply.

The condition $\int_{\mathbb{R}^2_{\geq 0}} (g-h)dxdy = 0$ is satisfied by construction of $p^*$.

Since $\mu^\mathcal{X}$ and $\nu^{\mathcal{Y}}$ have disjoint supports we notice that for any $z \in W_{p^*}$ we have
$$\mu_d^\mathcal{X}(z) - \nu_d^{\mathcal{Y}}(z) = \left( \frac{c_1z_1}{1+z_1}+\frac{c_2z_2}{1+z_2} - 3 \right) f_1(z_1) \cdot f_2(z_2).$$
Thus, the third condition of Theorem~\ref{regionthm} is satisfied with $\alpha(z_1) = f_1(z_1)$, $\beta(z_2) = f_2(z_2)$, and $\eta(z_1,z_2) =  \frac{c_1z_1}{1+z_1}+\frac{c_2z_2}{1+z_2} - 3 $, noting that $\eta$ is indeed an increasing function.

All that remains is to verify the second condition of Theorem~\ref{regionthm}. We break this verification into two parts, depending on whether we are integrating with respect to $z_1$ or $z_2$.
\begin{itemize}
	\item We begin by considering integration with respect to $z_1$. That is, for any fixed $0 \leq z_2 \leq p^*$, we must prove that
	$$\int_{p^*-z_2}^\infty \left( \frac{c_1 z_1}{1+z_1} +\frac{c_2 z_2}{1+z_2} -3 \right)\frac{c_1 - 1}{(1+z_1)^{c_1}}\cdot \frac{c_2 - 1}{(1+z_2)^{c_2}} dz_1 \leq 0. $$
	Since $z_2$ is fixed, it clearly suffices to prove that
		$$\int_{p^*-z_2}^\infty \left(\frac{6 z_1}{1+z_1} + \frac{7 z_2}{1+z_2} -3 \right)\frac{1}{(1+z_1)^{6}}dz_1 \leq 0. $$
	This integral evaluates to
	$$\frac{-0.18565 + 1.1145z_2-2z_2^2}{(1.35725-z_2)^6(1+z_2)}.$$
	Since the denominator is always positive, it suffices to prove that the numerator is negative. Indeed, the numerator is maximized at $z_2 = .2786$, in which case the numerator evaluates to $-.0304$.
	\item We now consider integration with respect to $z_2$. Analogously to the computation above, for any fixed $0 \leq z_1 \leq p^*$ we must prove that 
		$$\int_{p^*-z_1}^\infty \left(\frac{6 z_1}{1+z_1} + \frac{7 z_2}{1+z_2} -3 \right)\frac{1}{(1+z_2)^{7}}dz_2 \leq 0. $$
		This integral evaluates to
		$$\frac{-0.0951667 + .595416z_1 - 1.66667z_1^2}{(1.35725 - z_1)^7(1+z_1)}.$$
		As before, it suffices to prove that the numerator is negative. We verify that, indeed, the numerator achieves its maximum at $z_1 = .178625$, in which case the numerator is $-.0419886$.
\end{itemize}
Therefore, we have proven, by Theorem~\ref{regionthm}, that  $\mu^{\mathcal{X}}|_{W_{p^*}} \succeq \nu^{\mathcal{Y}}|_{W_{p^*}}$, as desired.

\section{Optimal Mechanism Design in Two Dimensions: Proof of Theorem~\ref{generalthm}}\label{generalappendix}

Here we complete the proof of Theorem~\ref{generalthm}. We must show that a bidder of type $(z_1, z_2)$ never has incentive of falsely declaring a type $(z_1^*, z_2^*)$ in the following mechanism:
\begin{itemize}
	\item If $(z_1,z_2) \in Z$, the bidder receives no goods and is charged 0.
	\item If $(z_1,z_2) \in \mathcal{A}$, the bidder receives item 1 with probability $-s'(z_1)$, item 2 with probability 1, and is charged $s(z_1) - z_1s'(z_1)$.
	\item If $(z_1, z_2) \in \mathcal{B}$, the bidder receives item 1 with probability 1, item 2 with probability $-1/s'(s^{-1}(z_2))$, and is charged $s^{-1}(z_2) - z_2/s'(s^{-1}(z_2))$.
	\item If $(z_1,z_2) \in \mathcal{W}$, the bidder receives both goods with probability 1 and is charged $a + s(a)$.
\end{itemize}

The proof considers several cases.

\begin{itemize}
	\item \textbf{Case: } $(z_1^*,z_2^*) \in Z$.
	
	In this scenario, it will never be in the player's interest to change his bid to $(z_1^*, z_2^*)$. Indeed, he receives 0 utility by bidding in $Z$, while his utility of truthful bidding is never negative.
	
	\item \textbf{Case: } $(z_1, z_2) \in Z$.
	
	We first argue that the bidder has no incentive to deviate to a bid $(z_1^*,z_2^*) \in \mathcal{W}$. Indeed, since $(z_1, z_2) \in Z$, it follows that $z_1 + z_2 \leq a+s(a)$, and therefore by deviating to a bid in $W$ the player's new utility will not be positive.
	
	Similarly, we argue that the bidder has no incentive to bid $(z_1^*,z_2^*) \in \mathcal{A}$. By deviating, the bidder will receive utility
	\begin{align*}-z_1s'(z_1^*) + z_2 - s(z_1^*) + z_1^*s'(z_1^*) &= (z_1^*-z_1)s'(z_1^*) + (z_2 - s(z_1^*))\\
	&\leq (z_1^*-z_1)s'(z_1^*) + (s(z_1)-s(z_1^*)).
	\end{align*}
	We claim that this term utility is at most 0. Indeed, if $z_1 \leq z_1^*$,  then the first term is negative while the second term is positive, and the desired inequality follows by concavity of $s$. If $z_1^* \leq z_1$, then the first term is positive while the second is negative, and the result once again follows from concavity.
	
	The proof that the bidder has no incentive to bid $(z_1^*,z_2^*) \in \mathcal{B}$ is identical.
	
	\item \textbf{Case: } $(z_1,z_2)$ and $(z_1^*,z_2^*) \in \mathcal{W}$.
	
	By falsely declaring $(z_1^*,z_2^*)$, the bidder will still receive both goods for price of $a+s(a)$, and his utility will be unchanged.
	
	\item \textbf{Case: } $(z_1, z_2)$ and $(z_1^*, z_2^*) \in \mathcal{A}$.
	
	When truthful, the bidder receives utility $z_2 - s(z_1)$. By declaring $(z_1^*,z_2^*)$ instead, the bidder's utility instead is $z_2 - z_1s'(z_1^*) - s(z_1^*) + z_1^*s'(z_1^*)$. Thus, the difference between the truthful and non-truthful utilities is
	\begin{align*}
	z_2 - s(z_1) - \left( z_2 - z_1s'(z_1^*) - s(z_1^*) + z_1^*s'(z_1^*) \right) &= s(z_1^*)-s(z_1) +  s'(z_1^*)(z_1-z_1^*)\\
	&\geq 0
	\end{align*}
	where the final inequality follows from the identical argument to the case $(z_1,z_2) \in Z$.
	
	\item \textbf{Case: } $(z_1,z_2)$ and $(z_1^*,z_2^*) \in \mathcal{B}$.
	
	This case is identical to the case $(z_1,z_2)$ and $(z_1^*,z_2^*) \in \mathcal{A}$.
	
	\item \textbf{Case: } $(z_1, z_2) \in \mathcal{A}$ and $(z_1^*,z_2^*) \in \mathcal{W}$.
	
	This case is nearly identical to the case $(z_1, z_2)$ and $(z_1^*, z_2^*) \in \mathcal{A}$. Indeed, consider the function $s^*:\mathbb{R}_{\geq 0}\rightarrow \mathbb{R}$ which is identical to $s$ on $[0,b]$ but continues downwards at a $45^\circ$ angle on $[b,\infty]$.
	
	 We notice that, for $(z_1^*,z_2^*) \in \mathcal{W}$, we have $-{s^*}'(z_1^*) = 1$ and $$s^*(z_1^*)-z_1^*{s^*}'(z_1^*) = s^*(z_1) + z_1 = a+s(a).$$ Therefore, the exact same analysis holds as in the prior case, treating $(z_1^*,z_2^*)$ as being in $\mathcal{A}$ and where we replacing $s$ with $s^*$.
	 
	 \item \textbf{Case: } $(z_1,z_2) \in \mathcal{W}$ and $(z_1^*,z_2^*) \in \mathcal{A}$.
	 
	 We use an analogous argument to the previous case: after replacing $s^*$ for $s$, we can view $\mathcal{W}$ in the same manner as $\mathcal{A}$.
	 
	 \item \textbf{Case: } $(z_1, z_2) \in \mathcal{B}$ and $(z_1^*,z_2^*) \in \mathcal{W}$.
	 
	 This is analogous to the case $(z_1, z_2) \in \mathcal{A}$ and $(z_1^*,z_2^*) \in \mathcal{W}$.
	 
	 \item \textbf{Case: } $(z_1,z_2) \in \mathcal{W}$ and $(z_1^*,z_2^*) \in \mathcal{B}$.
	 
	 This is analogous to the case $(z_1,z_2) \in \mathcal{W}$ and $(z_1^*,z_2^*) \in \mathcal{A}$.
	 
	 \item \textbf{Case:} $(z_1,z_2) \in \mathcal{B}$ and $(z_1^*,z_2^*) \in \mathcal{A}$.
	 
	 The difference between truthfully bidding $(z_1,z_2)$ and falsely declaring $(z_1^*,z_2^*)$ is
	 \begin{align*}
	 \left( z_1 - s^{-1}(z_2)  \right) - \left(  -z_1s'(z_1^*) + z_2 - s(z_1^*) + z_1^*s'(z_1^*) \right)\\
	 = (s(z_1^*)-z_2) + (z_1 - s^{-1}(z_2)) - ( -s'(z_1^*)(z_1 - z_1^*)).
	 \end{align*}
	We aim to prove that the above quantity is nonnegative. Since $(z_1,z_2) \in \mathcal{B}$ and  $(z_1^*,z_2^*) \in \mathcal{A}$, we have the following inequalities:
	 \begin{align*}
	 z_1^* &< s^{-1}(z_2) \leq z_1\\
	 z_2 &< s(z_1^*) \leq z_2^*
	 \end{align*}
	 and therefore $ (s(z_1^*)-z_2) $, $(z_1 - s^{-1}(z_2))$, and $ ( -s'(z_1^*)(z_1 - z_1^*))$ are all positive. In particular,  the desired inequality is immediate if $s'(z_1^*)=0$, and we may therefore assume that $s'(z_1^*) < 0$.
	 
	 We may also assume that $( -s'(z_1^*)(z_1 - z_1^*)) > s(z_1^*) - z_2$, since otherwise the inequality is immediate. Therefore, there exists an $x$ in between $z_1^*$ and $z_1$ such that
	$$( -s'(z_1^*)(x - z_1^*)) = s(z_1^*) - z_2.$$
	We claim now that $x \geq s^{-1}(z_2)$. Indeed, since $s$ is convex and decreasing, we compute
	\begin{align*}
	-s'(z_1^*)(s^{-1}(z_2) - z_1^*) \leq \frac{s(z_1^*)-z_2}{s^{-1}(z_2)-z_1^*}\cdot (s^{-1}(z_2) - z_1^*)  = s(z_1^*)-z_2.
	\end{align*}
	 Since $ -s'(z_1^*)(x - z_1^*)$ is an increasing function of $x$, it follows that $ s^{-1}(z_2)\leq x \leq z_1$.
	 We now write the difference between bidding $(z_1,z_2)$ and falsely declaring $(z_1^*,z_2^*)$ as
	 \begin{align*}
	& (s(z_1^*)-z_2) + (z_1 - s^{-1}(z_2)) - ( -s'(z_1^*)(z_1 - z_1^*)) = \\
	 &(s(z_1^*)-z_2) + (z_1 - s^{-1}(z_2)) - ( -s'(z_1^*)(x - z_1^*)) - (-s'(z_1^*)(z_1 - x))=\\
	 & (z_1 - s^{-1}(z_2)) - (-s'(z_1^*)(z_1 - x)) \geq \\
	 & (z_1 - x) - (-s'(z_1^*)(z_1 - x)).
	 \end{align*}
	Since $(z_1^*, z_2^*) \in \mathcal{A}$, it follows that $-1 \leq s'(z_1^*) \leq 0$, and the above expression is therefore nonnegative, as desired.

\item \textbf{Case:} 	$(z_1,z_2) \in \mathcal{A}$ and $(z_1^*,z_2^*) \in \mathcal{B}$.

This is identical to the case $(z_1,z_2) \in \mathcal{B}$ and $(z_1^*,z_2^*) \in \mathcal{A}$.	
\end{itemize}
We have thus shown that the bidder never has incentive to deviate from his truthful strategy, and therefore the mechanism is truthful.

\end{document}